\documentclass[english,11pt]{article}
\usepackage[latin1]{inputenc}
\usepackage{amsfonts}
\usepackage{amssymb}
\usepackage{amsmath}
\usepackage{amsthm}
\usepackage{enumerate}
\usepackage{bm}

\usepackage[T1]{fontenc}
\usepackage[mathscr]{euscript}

\usepackage[left=2.5cm, right=2.5cm, top=2.5cm, bottom=3cm]{geometry}

\newtheorem{theorem}{Theorem}

\newtheorem{definition}[theorem]{Definition}
\newtheorem{example}[theorem]{Example}

\newtheorem{lemma}[theorem]{Lemma}

\newtheorem{proposition}[theorem]{Proposition}
\newtheorem{remark}[theorem]{Remark}

\let\ds\displaystyle{}

\def\n{N}
\def\t{T}
\def\w{W}
\def\v{V}
\def\u{U}

\def\M{M^4_q(G)}
\def\N{\bm{N}}
\def\T{\bm{T}}
\def\W{\bm{W}}
\def\U{\bm{U}}
\def\V{\bm{V}}

\def\R{\bm{R}}

\def\D{D}
\def\Partial{\partial}

\def\k{\bm{k}}
\def\p{\bm{p}}
\def\q{\bm{q}}
\def\f{\bm{f}}
\def\h{\bm{h}}
\def\l{\bm{l}}
\def\g{\bm{g}}

\def\rk{\mathop{\rm ord}\nolimits}

\begin{document}
\reversemarginpar

\title{Null curve evolution in four-dimensional pseudo-Euclidean spaces}
\author{José del Amor$^{a}$, Ángel Giménez$^{b}$ and Pascual Lucas$^{a}$ \\
$^{a}${\small Departamento de Matemáticas, Universidad de Murcia}\\
{\small Campus del Espinardo, 30100 Murcia, Spain} \\
$^{b}${\small Centro de Investigación Operativa, Universidad Miguel Hernández de Elche}\\
{\small Avda. Universidad s/n, 03202 Elche (Alicante), Spain}\\
}
\date{}

\maketitle

\begin{abstract}
	We define a Lie bracket on a certain set of local vector fields along a null curve in a $4$-dimensional semi-Riemannian space form. This Lie bracket will be employed to study integrability properties of evolution equations for null curves in a pseudo-Euclidean space. In particular, a geometric recursion operator generating infinite many local symmetries for the null localized induction equation is provided.
\end{abstract}

\noindent\textit{Keywords.} Curve evolution; Null curve; Lie algebra; Integrability; Hirota-Satsuma system.

\section{Introduction}
Recently in \cite{del_amor_hamiltonian_2014,musso_hamiltonian_2010} a connection between the local motion of a null curve in $\mathbb{L}^{3}$ and the celebrated KdV equation was given. In \cite{li_motions_2013} the author obtained a connection between a null curve evolution in $\mathbb{L}^{4}$ (to which we refer to as the ``null localized induction equation'' or NLIE), and the Hirota-Satsuma coupled KdV (HS-cKdV) system, which we remind here briefly. Hirota and Satsuma \cite{hirota_soliton_1981} proposed (perhaps up to rescaling) the HS-cKdV system
\begin{equation}\label{eq-HS}
	\begin{aligned}
		u_t&=\frac{1}{2}u_{xxx}+3uu_x-6vv_x, \\
		v_t&=-v_{xxx}-3uv_x,
	\end{aligned}
\end{equation}
which describes the interactions of two long waves with different dispersion relations. Many systematic methods have been employed in the literature to clarify the integrability of the HS-cKdV system: the Lax pair \cite{dodd_integrability_1982,weiss_modified_1985,weiss_sine-gordon_1984}, the Bäcklund transformation method \cite{levi_hierarchy_1983}, the Darboux transformation \cite{leble_darboux_1993,hu_new_2003,hu_new_2008}, the Painlevé analysis \cite{weiss_modified_1985,weiss_sine-gordon_1984}, the search of infinitely many symmetries and conservation laws \cite{hirota_soliton_1981,fuchssteiner_lie_1982,oevel_integrability_1983}, etc.
Fuchssteiner \cite{fuchssteiner_lie_1982} discovered that HS-cKdV system given by \eqref{eq-HS} admits the symplectic and the cosymplectic operators 
\begin{align}
	\label{eq-symplectic-HS}
	J(u,v)&=\begin{pmatrix} \frac{1}{2}D_{x}+uD_{x}^{-1}+D_{x}^{-1}u  & -2D_{x}^{-1}v \\ -2 v D_{x}^{-1}   & -2 D_{x} \end{pmatrix}, \\
	\label{eq-cosymplectic-HS}
	\Theta(u,v)&=\begin{pmatrix}\frac{1}{2} D_x^3+uD_x+D_{x}u & D_{x}v + v D_{x} \\ D_{x}v + vD_{x}  & \frac{1}{2}D_{x}^3+uD_{x}+D_{x}u \end{pmatrix}.
\end{align}
Furthermore, an infinite hierarchy of symmetries for \eqref{eq-HS} was found in \cite{oevel_integrability_1983},
\begin{equation}\label{eq-flows-HS}
	\sigma_{0}=(u_{x},v_{x});\quad\sigma_{1}=(\frac{1}{2}u_{xxx}+3uu_x-6vv_x,-v_{xxx}-3uv_x);\quad \left\{\sigma_{2n}=(\Theta J)^{n}\sigma_{0};\; \sigma_{2n+1}=(\Theta J)^{n}\sigma_{1}\right\}.
\end{equation}
In this case the recursion operator $\Theta J$ is not hereditary. Hence, the bi-Hamiltonian formulation of the HS-cKdV system does not arise from a Hamiltonian pair. 

In this paper, we extend some of the results given in \cite{del_amor_hamiltonian_2014,musso_hamiltonian_2010,li_motions_2013} to a more general background. More specifically, we generalize the Lie algebra structure defined on the local vector fields along null curves from the $3$-dimensional Minkowski space to $4$-dimensional semi-Riemannian space forms $M^4_q(G)$ of index $q=1$ or $q=2$ and curvature $G$. This Lie algebra together with the properties about the HS-cKdV system described above will be used to construct an infinity hierarchy of commuting symmetries for the NLIE equation in a $4$-dimensional pseudo-Euclidean space.

It is interesting to point out that, from a physical point of view, the $4$-dimensional space is a more realistic context than the $3$-dimensional background, the latter very often serving merely as a toy model. Let us recall that relativistic particles models have been described by actions defined on null curves whose Lagrangians are functions of their curvatures \cite{nersessian_massive_1998,nersessian_particle_2000}. These actions were also studied in Minkowski spaces $\mathbb{L}^{3}$ and $\mathbb{L}^{4}$ (see \cite{ferrandez_geometrical_2002,ferrandez_relativistic_2007}), as well as in $3$-dimensional Lorentzian space forms in \cite{gimenez_relativistic_2010}. All these works were addressed to study variational problems on null curve spaces,  and they have shown that the underlying mechanical system is governed by a stationary system of Korteweg-De Vries type.  Furthermore, if $\gamma(\sigma,0)$ is a critical point (the so-called null elastica) for the action $c\int k \, d\sigma$, where $c$ is a constant, then the associated solution $\gamma(\sigma,t)$ to the NLIE starting from $\gamma(\sigma,0)$ is the null elastica evolving by rigid motions in the direction $X=\frac{1}{2}kT+N$, where $X$ is actually the rotational Killing vector field for the null elastica (see \cite{ferrandez_geometrical_2002,gimenez_relativistic_2010}), and it served to determine the benchmark for the evolution equation NLIE in \cite{del_amor_hamiltonian_2014}. This idea was originally explored by Hasimoto between the elastica (an equilibrium shape of an elastic rod) and the ``localized induction equation'' (LIE). As might be expected, relationships with the Korteweg-De Vries evolution systems still arise when the null curve motions in $4$-dimensional backgrounds are considered.

One of the many advantages of having a scalar evolution equation coming from a curve motion is that many aspects of its integrability can be elucidated from the intrinsic geometry of the involved curves (see \cite{beffa_integrable_2002,sanders_integrable_2003}). Conversely, integrability properties of the curvature flow can be employed to determine integrability properties of the curve evolution equation (see \cite{del_amor_hamiltonian_2014,langer_poisson_1991, mansfield_evolution_2006,del_amor_lie_2015}). Despite the numerous well-known connections between curve evolution equations and integrable Hamiltonian systems of PDEs, there is still a lack of understanding about the mechanisms and links among the different frameworks. Our overall aim here is to go further into those concerns.

The rest of this paper is organized as follows. In section 2 we summarize some basic notions about formal variational calculus on which the Hamiltonian theory of nonlinear evolution equations is based. In section 3 we have included some background formulas and results concerning the differential geometry of null curves in a semi-Riemannian space form. In particular, we study the properties of variation vector fields along a null curve in a semi-Riemannian space form as well as the variational formulas for its curvatures. In section 4, the Lie bracket on the set of $\mathcal{P}$-local vector fields locally preserving the causal character along null curves in a $3$-dimensional Lorentzian space given in \cite{del_amor_hamiltonian_2014} is extended to $4$-dimensional semi-Riemannian space forms. This section also includes a discussion about the connection between the geometric variational formulas for curvatures and the Hamiltonian structure for the HS-cKdV system.  The above results will be employed in section 5 to introduce the NLIE equation as a geometric realization of HS-cKdV equations, and to construct a geometric recursion operator generating an infinity hierarchy of commuting symmetries for the NLIE equation.

\section{Preliminaries}\label{preliminares}
In this section we summarize some necessary notions and basic definitions from differential calculus which are relevant to the rest of the paper (see \cite{dorfman_dirac_1993,dickey_soliton_2003,blaszak_multi-hamiltonian_2012} for a very complete treatment of the subject).  
Let $n$ be a positive integer and consider $u_1,u_2,\ldots,u_{n}$ differentiable functions in the real variable $x$. Set
\begin{equation*}
	u_i^{(m)}=\frac{d^mu_i}{dx^m},\quad \text{for } m\in \mathbb{N}, i\in \left\{1,\ldots,n\right\}.
\end{equation*}

Let $\mathcal{P}$ be the algebra of \emph{polynomials} in $u_1,u_2,\ldots,u_{n}$ and their derivatives of arbitrary order, namely, 
\begin{equation*}
	\mathcal{P}=\mathbb{R}[u^{(m)}_i:m\in \mathbb{N},i\in \left\{1,\ldots,n\right\}].
\end{equation*}
We refer to the elements of $\mathcal{P}$ whose constant term vanishes as $\mathcal{P}_0$. Acting on the algebra $\mathcal{P}$ is defined a derivation $\partial$ obeying
\begin{equation*}
	\begin{cases} 
		\partial(f g)=(\partial f)g+f(\partial g), \\
		\partial(u_i^{(m)})=u_i^{(m+1)},
	\end{cases}
\end{equation*}
thereby becoming a differential algebra. 
\begin{remark}
	In a more general setting we can consider $\mathcal{P}$, for example, to be the algebra of \emph{local functions}, i.e. $\mathcal{P}=\bigcup_{j=1}^{\infty}\mathcal{P}_j$, where $\mathcal{P}_j$ is the algebra of locally analytic functions of $u_{1},u_{2},\ldots,u_{n}$ and their derivatives up to order $j$ (see \cite{mikhailov_symmetry_1987,sokolov_symmetries_1988,mikhailov_symmetry_1991,mikhailov_towards_1998}). All the results and formulas established in sections 4 and 5 involving the algebra $\mathcal{P}$ remain valid if the differential algebra of polynomials is replaced by the differential algebra of local functions. Nonetheless, the differential algebra of polynomials is sufficient for our purposes.
\end{remark}
It is customary to take $\partial$ as the total derivative $D_x$ which can be viewed as
\begin{equation*}
	D_x=\sum_{i=1}^{n}\sum_{m\in \mathbb{N}}u_i^{(m+1)} \frac{\partial }{\partial u_i^{(m)}}.
\end{equation*}
In addition to $\partial$, other derivations $\xi$ may also be considered. The action of $\xi$ is determined if we know how $\xi$  acts on the generators of the algebra. Indeed, set 
$$a_{i,m}=\xi u_i^{(m)},$$
then, for any $f\in \mathcal{P}$ we have
$$\xi f=\sum_{i=1}^{n}\sum_{m\in\mathbb{N}} a_{i,m} \frac{\partial f}{\partial u_i^{(m)}}.$$
The space of all derivations on $\mathcal{P}$, denoted by $\mathop{\rm der}\nolimits(\mathcal{P})$, is a Lie algebra with respect to the usual commutator
\begin{equation*}
	[\partial_1,\partial_2]=\partial_1\partial_2-\partial_2\partial_1,\quad \partial_1,\partial_2\in \mathop{\rm der}\nolimits(\mathcal{P}).
\end{equation*}
Derivations commuting with the total derivative have important properties. Among others, if $[\xi,\partial]=0$, we have
\begin{equation*}
	a_{i,m+1}=\xi u_i^{(m+1)}=\xi\partial u_i^{(m)}=\partial\xi u_i^{(m)}=\partial a_{i,m}.
\end{equation*}
Thus 
\begin{equation*}
	\xi f=\sum_{i=1}^{n}\sum_{m\in\mathbb{N}} a_i^{(m)} \frac{\partial f}{\partial u_i^{(m)}},
\end{equation*}
where $a_i=a_{i,0}=\xi u_i$. Let $A=(a_1,\ldots,a_{n})$ be an element of $\mathcal{P}^{n}$ and write
\begin{equation}\label{derivation}
	\partial_A=\sum_{i=1}^{n}\sum_{m\in\mathbb{N}} a_i^{(m)}\frac{\partial }{\partial u_i^{(m)}}.
\end{equation}
The set of derivations $\partial_A$ is a Lie subalgebra of $\mathop{\rm der}\nolimits(\mathcal{P})$, and it induces a Lie algebra on the space $\mathcal{P}^{n}$. Indeed, a direct computation shows that $\partial_A$ are derivations in $\mathcal{P}$ verifying $[\partial_A,\partial_B]=\partial_{[A,B]}$, 
where $[A,B]=\partial_AB-\partial_BA$, i.e.
\begin{equation*}
	[A,B]_j=\sum_{i=1}^{n}\sum_{m\in\mathbb{N}}\left(a_{i}^{(m)}\frac{\partial b_j}{\partial u_i^{(m)}}-b_{i}^{(m)}\frac{\partial a_j}{\partial u_i^{(m)}}\right).
\end{equation*}
This latter commutator can also be expressed  with the aid of Fréchet derivatives as $[A,B]=B'[A]-A'[B]$, where
\begin{equation}\label{frechet-derivative}
		A'[B]=\left.\frac{d}{dt}\right|_{t=0}\partial_{A}(u+tB), \qquad u=(u_{1},u_{2},\ldots,u_{n}).
\end{equation}
We will refer to $\partial_{A}$ as an evolution derivation (or a vector field, provided that no confusion is possible), and the algebra of all evolution derivations will be denoted by  $\mathop{\rm der}\nolimits^*(\mathcal{P})$. Observe that, in particular, if we take $A=u'=(u'_1,\ldots,u'_{n})$ then $\partial=\partial_{u'}$, being $u'=D_x u$.

Set $u^{(m)}=(u_{1}^{(m)},u_{2}^{(m)},\ldots,u_{n}^{(m)})$ for $m\in \mathbb{N}$. Consider an evolution equation of the form
\begin{equation}\label{evolution-equation}
	\frac{\partial u}{\partial t}=F(u,u^{(1)},u^{(2)},\ldots)
\end{equation}
where $F=(f_{1},\ldots,f_{n})$ is an element of $\mathcal{P}^{n}$. An element $S=(s_{1},s_{2},\ldots,s_{n})\in \mathcal{P}^{n}$ is called a symmetry of the evolution equation \eqref{evolution-equation} if and only if  $[F,S]=F'[S]-S'[F]=0$.
Symmetries of integrable equations can often be generated by recursion operators which are linear operators mapping a symmetry to a new symmetry. A linear differential operator $\mathcal{R}:\mathcal{P}^{n}\rightarrow \mathcal{P}^{n}$ is a recursion operator for the evolution equation \eqref{evolution-equation} if it is invariant under $F$, i.e., $L_F \mathcal{R}=0$, where $L_F$ is the Lie derivative acting as $L_{F}A=[F,A]$ for all $A\in \mathcal{P}^{n}$. $\mathcal{R}$ is said to be hereditary if for an arbitrary vector field $F\in \mathcal{P}^{n}$ the relation $L_{\mathcal{R}F}\mathcal{R}=\mathcal{R}L_F \mathcal{R}$ is verified.

\section{Null curve variations in $\M$}
The geometry of null curves is quite different from the non-null ones, so let us review the relevant results, going further into what concerns us most for later work. 

A semi-Riemannian manifold $(M^n_q,g)$ is an $n$-dimensional differentiable manifold $M^n_q$ endowed with a non-degenerate metric tensor $g$ with signature $(n-q,q)$. The metric tensor $g$ will be also denoted by $\left\langle\cdot , \cdot\right\rangle$ and the Levi-Civita connection by $\nabla$. The sectional curvature of a non-degenerate plane generated by $\left\{u,v\right\}$ is $$K(u,v)=\frac{\left\langle R(u,v)u,v \right\rangle}{\left\langle u,u \right\rangle \left\langle v,v \right\rangle-\left\langle u,v \right\rangle^2},$$
where $R$ is the semi-Riemannian curvature tensor given by
\begin{equation}\label{eq-curvature-tensor}
	R(X,Y)Z=-\nabla_X\nabla_YZ+\nabla_Y\nabla_XZ+\nabla_{[X,Y]}Z.
\end{equation}

Semi-Riemannian manifolds with constant sectional curvature are called semi-Riemannian space forms. It is a well-known fact that the curvature tensor $R$ adopts a simple formula in these manifolds:
\begin{equation}\label{eq-constant-curvature}
	R(X,Y)Z=G \left\{\left\langle Z,X \right\rangle Y- \left\langle Z,Y \right\rangle X\right\},
\end{equation}
where $G$ is the constant sectional curvature. When the curvature $G$ vanishes, then $M^n_q$ is called pseudo-Euclidean  space and will be denoted by $\mathbb{R}^{n}_{q}$.

Let $\M$ denote a $4$-dimensional semi-Riemannian space form with index $q=1,2$, background gravitational field $\left\langle , \right\rangle$ and Levi-Civita connection $\nabla$. A tangent vector $v$ is: timelike if $\left\langle v,v \right\rangle<0$; spacelike if $\left\langle v,v  \right\rangle>0$ or $v=0$; null if $\left\langle  v,v \right\rangle=0$. Therefore, a parametrized curve $\gamma:I\rightarrow \M$ is called null if its tangent vector is null at all points in the curve. Fixed a constant $a>0$, we can consider (if $\gamma$ is not a geodesic) the parameter $\sigma_{a}$ given by 
$$\sigma_a(s)=\frac{1}{\sqrt{a}}\int \left\langle \nabla_{\gamma'(s)}\gamma'(s),\nabla_{\gamma'(s)}\gamma'(s)\right\rangle^{\frac{1}{4}}ds,$$ where $s$ is any parameter. When $a=1$ this parameter agrees with the pseudo arc-length parameter $\sigma$ for the null curve. In fact, it is easy to show that $\sigma_a$ is nothing but a linear reparametrization of the pseudo arc-length parameter and it verifies 
$$\left\langle\nabla_{\gamma'(\sigma_a)}\gamma'(\sigma_a),\nabla_{\gamma'(\sigma_a)}\gamma'(\sigma_a)\right\rangle=a^2.$$

Throughout this paper it will be supposed that we have fixed a constant $a$, $\sigma_a$ will be denoted by $\sigma$ and we will also refer to it as the pseudo arc-length parameter. The Cartan frame of a non-geodesic null curve $\gamma:I\to \M$, verifying that $\left\{\gamma'(\sigma),\gamma''(\sigma),\gamma'''(\sigma)\right\}$ is linearly independent for all $\sigma \in I$, is given by $\left\{\t=\gamma'(\sigma),\w_1,\n,\w_2\right\}$, where
\begin{align*}
	\left\langle \t,\t \right\rangle &=\left\langle \n,\n \right\rangle=0,\qquad \left\langle \t,\n \right\rangle =-1,\\
	\left\langle \w_i,\t \right\rangle &=\left\langle \w_i,\n \right\rangle=0,\qquad \left\langle \w_i,\w_i \right\rangle=\varepsilon_i,
\end{align*}
with $i=1,2$. The Cartan equations read
\begin{equation}\label{eq-cartan}
	\begin{aligned}
		\nabla_T T &= a W_1, \\
		\nabla_T W_1 &= -k_{1} T+a \varepsilon_1 N, \\
		\nabla_T N &= -\varepsilon_1 k_{1} W_1+\varepsilon_2 k_{2} W_2, \\
		\nabla_T W_2 &=  k_{2} T, 
	\end{aligned}
\end{equation}
where $\nabla_\t$ denotes the covariant derivative along $\gamma$ and $k_{1}, k_{2}$ are the \emph{curvatures} of the curve. The fundamental theorem for null curves tells us that $k_{1}$ and $ k_{2}$ determine completely the null curve up to semi-Riemannian isometries (see \cite{ferrandez_null_2001}). Even more, if functions $k_{1}$ and $ k_{2}$ are given we can always construct a null curve, pseudo arc-length parametrized, whose curvature functions are precisely $k_{1}$ and $ k_{2}$. Then any local scalar geometrical invariant defined along a null curve can always be expressed as a function of its curvatures and derivatives. A non-geodesic null curve being pseudo arc-length parametrized and admitting a Cartan frame as above is called a \emph{Cartan curve}. The bundle given by $\mathop{\rm span}\nolimits \left\{W_1,W_2\right\}$ is known as the screen bundle of $\gamma$ (see \cite{ferrandez_null_2001}). Projections of the variation vector fields onto the screen bundle will play a leading role in this research.

Let $\gamma$ be a null curve, for the sake of simplicity the letter $\gamma$ will also denote a variation of null curves (null variation) $\gamma=\gamma(s,t):I\times(-\zeta,\zeta)\rightarrow \M$ with $\gamma(s,0)$ the initial null curve. Associated with such a variation is the variation vector field $\v(s)=\v(s,0)$, where $\v=\v(s,t)=\frac{\partial \gamma}{\partial t}(s,t)$. We denote by $\eta$ the differentiable function verifying $\frac{\partial \gamma}{\partial s}(s,t)=\eta(s,t)\t(s,t)$, and by $\frac{D}{\partial t}$ the covariant derivative along the curves $\gamma_s(t)=\gamma(s,t)$. We write $\gamma(\sigma,t), k_{i}(\sigma,t), \v(\sigma,t),$ etc., for the corresponding objects in the pseudo arc-length parametrization.

\begin{definition}
	Let $\mathfrak{X}(\gamma)$ be the set of smooth vector fields along $\gamma$. We say that $\v\in\mathfrak{X}(\gamma)$  locally preserves the causal character if $\left\langle \nabla_\t \v, \t \right\rangle=0$. We also say that $\v$ locally preserves the pseudo arc-length parameter along $\gamma$ if $\eta(s, t)$ 
	satisfies 
	$\left.\frac{\partial \eta}{\partial t}\right|_{\begin{subarray}{1} t=0\end{subarray}}=0$.
\end{definition}

The following properties for null variations can be found in \cite{ferrandez_relativistic_2007} when $a=\varepsilon_{1}=\varepsilon_{2}=1$, but they can be easily adapted to the general situation. 

\begin{lemma}\label{lema-variation1}
	If $\gamma$ is a null variation, then its variation vector field $\v$ verifies
	\begin{equation}\label{eq-variation1}
		\left\langle \nabla_\t \v,\t  \right\rangle=0; \qquad 	\left.\frac{\partial\eta}{\partial t}\right|_{t=0} = -\frac{1}{2a}\rho_V\eta, \qquad [V,T] = \frac{1}{2a}\rho_V T,
	\end{equation}	
	where $\rho_V=-\varepsilon_1\left\langle \nabla_\t^2\v,\w_1  \right\rangle$.
\end{lemma}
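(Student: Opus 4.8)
The plan is to derive all three identities from a single master equation produced by the torsion‑freeness of $\nabla$. Since $[\partial_s,\partial_t]=0$ on the parameter domain and $\nabla$ has no torsion, the usual symmetry lemma for pull‑back connections gives $\frac{D}{\partial t}\frac{\partial\gamma}{\partial s}=\frac{D}{\partial s}\frac{\partial\gamma}{\partial t}$, where $\frac{D}{\partial s}$ denotes the covariant derivative along the curves $s\mapsto\gamma(s,t)$. Substituting $\frac{\partial\gamma}{\partial s}=\eta T$ and $\frac{\partial\gamma}{\partial t}=V$, developing the left‑hand side by the Leibniz rule and using $\frac{D}{\partial s}=\eta\nabla_T$, one obtains
\begin{equation*}
	\nabla_T V=\frac{1}{\eta}\frac{\partial\eta}{\partial t}\,T+\frac{D}{\partial t}T .
\end{equation*}
Pairing this with $T$ proves the first identity at once: the first term vanishes because $\langle T,T\rangle=0$ (each $\gamma(\cdot,t)$ is a null curve) and the second because $\langle\frac{D}{\partial t}T,T\rangle=\tfrac12\frac{\partial}{\partial t}\langle T,T\rangle=0$, so $\langle\nabla_T V,T\rangle=0$. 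The third identity will then drop out of the second: torsion‑freeness also gives $[V,T]=\nabla_V T-\nabla_T V=\frac{D}{\partial t}T-\nabla_T V$, which by the master equation equals $-\frac{1}{\eta}\frac{\partial\eta}{\partial t}\,T$, and inserting $\left.\frac{\partial\eta}{\partial t}\right|_{t=0}=-\tfrac{1}{2a}\rho_V\eta$ turns this into $[V,T]=\tfrac{1}{2a}\rho_V T$.

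So the substance is the middle formula, and the idea there is to differentiate in $t$ the scalar invariant $\langle\frac{D}{\partial s}\frac{\partial\gamma}{\partial s},\frac{D}{\partial s}\frac{\partial\gamma}{\partial s}\rangle$, which records the pseudo arc‑length element. From $\frac{\partial\gamma}{\partial s}=\eta T$ and the Cartan equation $\nabla_T T=aW_1$ one gets $\frac{D}{\partial s}\frac{\partial\gamma}{\partial s}=\eta_s T+a\eta^2 W_1$, hence, since $T$ is null and $\langle T,W_1\rangle=0$, $\langle W_1,W_1\rangle=\varepsilon_1$,
\begin{equation*}
	\left\langle \frac{D}{\partial s}\frac{\partial\gamma}{\partial s},\frac{D}{\partial s}\frac{\partial\gamma}{\partial s}\right\rangle=\varepsilon_1 a^2\eta^4 .
\end{equation*}
Differentiating the left‑hand side at $t=0$: commuting $\frac{D}{\partial t}$ past the outer $\frac{D}{\partial s}$ and using the master equation in vector form $\frac{D}{\partial t}\frac{\partial\gamma}{\partial s}=\frac{D}{\partial s}V$ reduces it to $2\langle\frac{D}{\partial s}\frac{D}{\partial s}V,\eta_s T+a\eta^2 W_1\rangle$ plus a curvature correction, which by \eqref{eq-curvature-tensor} and \eqref{eq-constant-curvature} is a multiple of $\frac{\partial\gamma}{\partial s}=\eta T$ and hence orthogonal to $\eta_s T+a\eta^2 W_1$, so it disappears. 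Expanding $\frac{D}{\partial s}\frac{D}{\partial s}V=\eta_s\nabla_T V+\eta^2\nabla_T^2 V$, and using the first identity $\langle\nabla_T V,T\rangle=0$ together with its consequence $\langle\nabla_T^2 V,T\rangle=-a\langle\nabla_T V,W_1\rangle$ (obtained by differentiating $\langle\nabla_T V,T\rangle=0$ along $T$ and invoking $\nabla_T T=aW_1$ once more), the two terms proportional to $\langle\nabla_T V,W_1\rangle$ cancel and there remains
\begin{equation*}
	2a\eta^4\langle\nabla_T^2 V,W_1\rangle=4\varepsilon_1 a^2\eta^3\,\left.\frac{\partial\eta}{\partial t}\right|_{t=0} .
\end{equation*}
Solving for $\left.\frac{\partial\eta}{\partial t}\right|_{t=0}$ and substituting $\rho_V=-\varepsilon_1\langle\nabla_T^2 V,W_1\rangle$ (with $\varepsilon_1^2=1$) yields exactly $\left.\frac{\partial\eta}{\partial t}\right|_{t=0}=-\tfrac{1}{2a}\rho_V\eta$.

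The one point that needs genuine care is interpretive rather than computational: the frame $\{T,W_1,N,W_2\}$ must be read as the Cartan frame of the entire null variation $\gamma(\cdot,t)$, so that the Cartan equations \eqref{eq-cartan}, and with them the identity $\langle\frac{D}{\partial s}\frac{\partial\gamma}{\partial s},\frac{D}{\partial s}\frac{\partial\gamma}{\partial s}\rangle=\varepsilon_1 a^2\eta^4$, hold for all $t$ near $0$ and may legitimately be differentiated in $t$. Everything else is bookkeeping with the constants $a$ and $\varepsilon_1$ and with the (nonstandard) curvature sign convention of \eqref{eq-curvature-tensor}; the only place where an obstruction could have arisen, a surviving curvature contribution, is killed by the orthogonality $\langle\frac{\partial\gamma}{\partial s},\frac{D}{\partial s}\frac{\partial\gamma}{\partial s}\rangle=0$, so the argument is valid on an arbitrary $\M$, not merely when $G=0$.
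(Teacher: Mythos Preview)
Your argument is correct. The paper itself does not prove Lemma~\ref{lema-variation1}: it simply refers the reader to \cite{ferrandez_relativistic_2007} for the case $a=\varepsilon_1=\varepsilon_2=1$ and asserts that the general case is an easy adaptation. Your write-up therefore supplies precisely what the paper omits.

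The overall strategy you use---exploit the symmetry $\frac{D}{\partial t}\gamma_s=\frac{D}{\partial s}\gamma_t$ of the torsion-free connection, then differentiate the scalar $\langle\gamma_{ss},\gamma_{ss}\rangle=\varepsilon_1 a^2\eta^4$ in $t$---is the standard one and is essentially what the cited reference does in the special case. Two of your side computations are worth highlighting as genuinely load-bearing: the observation that the curvature correction $R(\gamma_t,\gamma_s)\gamma_s$ is a multiple of $\gamma_s=\eta T$ (because $\langle\gamma_s,\gamma_s\rangle=0$) and hence pairs to zero against $\eta_s T+a\eta^2 W_1$, and the cancellation of the cross terms in $\langle\nabla_T V,W_1\rangle$ via the differentiated identity $\langle\nabla_T^2 V,T\rangle=-a\langle\nabla_T V,W_1\rangle$. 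Both are handled correctly. Your closing remark about the Cartan frame being that of the whole variation $\gamma(\cdot,t)$ is also apt: without it the equality $\langle\gamma_{ss},\gamma_{ss}\rangle=\varepsilon_1 a^2\eta^4$ would only hold at $t=0$ and could not be differentiated.
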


Thus we obtain that $\v$ locally preserves the causal character and,  moreover, $\v$ locally preserves the pseudo arc-length parameter if and only if $\rho_\v=0$, which in such a case also entails commutation of $\t$ and $\v$. We define some functions that will play a key role in the rest of the paper, namely, given a vector field $V\in\mathfrak{X}(\gamma)$ we consider the following projections of $V$ and $\nabla_TV$ on the screen bundle given by
\begin{equation}\label{eq-notacion}
		h_V :=\varepsilon_1\langle V,W_1\rangle, \qquad  l_{V} :=\varepsilon_2\langle V,W_2\rangle, \qquad 
		\varphi_{V} :=\varepsilon_1\langle\nabla_T V,W_1\rangle, \qquad  \psi_{V} :=\varepsilon_2\langle\nabla_T V,W_2\rangle.
\end{equation} 

\begin{lemma}\label{lema-variation2} 
	With the above notation, the following assertions hold: 
	\begin{enumerate}[(a)]
		\item $\ds\dfrac{DT}{\partial t}\bigg|_{t = 0}  = -\alpha_V T +\varphi_V W_1 +  \psi_V W_2 $;
		\item $\ds\dfrac{DW_1}{\partial t}\bigg|_{t = 0}  = -\beta_V T +\varepsilon_1\varphi_V N + \dfrac{1}{a}\psi_V' W_2 $;
		\item $\ds\dfrac{DN}{\partial t}\bigg|_{t = 0}  = -\varepsilon_1\beta_V W_1 +\alpha_V N + \dfrac{\varepsilon_1}{a}\delta_V W_2 $;
		\item $\ds\dfrac{DW_2}{\partial t}\bigg|_{t = 0}  = \dfrac{\varepsilon_1\varepsilon_2}{a}\delta_V T -\dfrac{\varepsilon_1\varepsilon_2}{a}\psi_V' W_1 + \varepsilon_2\psi_V N $;
		\item $\ds\dfrac{\partial k_{1}}{\partial t}\bigg|_{t = 0}  = 
		\dfrac{1}{a^2}\varphi_V ''' +\dfrac{1}{a}\left[\left(k_{1}\varphi_V\right)'+k_{1}\varphi_V '\right] 
		-\dfrac{1}{a}\left[\left( k_{2}\psi_V\right)'+ k_{2}\psi_V '\right] 
		+\dfrac{1}{a}\left[\dfrac{1}{2a}\rho_V'' +k_{1}\rho_V -2Gg_V'\right]$;
		\item $\ds\dfrac{\partial  k_{2}}{\partial t}\bigg|_{t = 0}  = 
		\dfrac{\varepsilon_1\varepsilon_2}{a^2}\psi_V ''' +\dfrac{\varepsilon_1\varepsilon_2}{a}\left[\left(k_{1}\psi_V\right)'+k_{1}\psi_V '\right] 
		+\dfrac{1}{a}\left[\left( k_{2}\varphi_V\right)'+ k_{2}\varphi_V '\right] 
		+\dfrac{1}{a} k_{2}\rho_V -\varepsilon_2 Gl_V$;
	\end{enumerate}
	where  
	\begin{equation}\label{eq-notacion1}
		\alpha_V = \dfrac{1}{a}\left(\varphi_V ' +\dfrac{1}{2}\rho_V\right), \quad \beta_V = \dfrac{1}{a}\left(\alpha_V '+ k_{1}\varphi_V - k_{2}\psi_V -Gg_V\right), \quad \delta_V = \dfrac{1}{a}\psi_V '' + k_{1}\psi_V + \varepsilon_1\varepsilon_2 k_{2}\varphi_V. 
	\end{equation}
\end{lemma}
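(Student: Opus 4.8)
The plan is to prove all six assertions in Lemma~\ref{lema-variation2} simultaneously, by differentiating the Cartan frame and the Cartan equations \eqref{eq-cartan} along the variation $\gamma(s,t)$ and evaluating at $t=0$, using only three ingredients: torsion-freeness of $\nabla$, the explicit curvature \eqref{eq-constant-curvature} of $\M$, and Lemma~\ref{lema-variation1}. Working in the pseudo arc-length parametrization one has $\eta\equiv 1$ at $t=0$ and $\partial\eta/\partial t|_{t=0}=-\tfrac1{2a}\rho_V$; since $\nabla_T=\tfrac1\eta\nabla_{\partial_s}$, combining the commutation rule $\nabla_{\partial_t}\nabla_{\partial_s}=\nabla_{\partial_s}\nabla_{\partial_t}+R(\partial_s\gamma,\partial_t\gamma)$ (coordinate fields, with the sign convention \eqref{eq-curvature-tensor}) with $\partial_s\gamma=\eta T$ yields, along $\gamma(\cdot,0)$, the commutation formula
\[
\frac{D}{\partial t}\bigl(\nabla_T X\bigr)\Big|_{t=0}
=\frac1{2a}\rho_V\,\nabla_T X+\nabla_T\!\Bigl(\frac{DX}{\partial t}\Big|_{t=0}\Bigr)
+G\bigl(\langle X,T\rangle V-\langle X,V\rangle T\bigr),
\]
valid for every vector field $X$ along the variation. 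I will also use that, in the Cartan frame, the $T$-component of a vector $Y$ equals $-\langle Y,N\rangle$, its $N$-component equals $-\langle Y,T\rangle$, and its $W_i$-component equals $\varepsilon_i\langle Y,W_i\rangle$; thus the functions in \eqref{eq-notacion} are precisely the relevant frame components of $V$ and of $\nabla_T V$, and $g_V=-\langle V,T\rangle$ is the $N$-component of $V$.

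Step (a) uses only torsion-freeness: $\nabla_{\partial_t}(\partial_s\gamma)=\nabla_{\partial_s}(\partial_t\gamma)$ gives $\eta_t T+\eta\nabla_{\partial_t}T=\eta\nabla_T V$, hence $DT/\partial t|_{t=0}=\nabla_T V+\tfrac1{2a}\rho_V T$; expanding $\nabla_T V$ in the Cartan frame (its $N$-component $-\langle\nabla_T V,T\rangle$ vanishes by Lemma~\ref{lema-variation1}, its $W_i$-components are $\varphi_V,\psi_V$, and differentiating $\varphi_V=\varepsilon_1\langle\nabla_T V,W_1\rangle$ with \eqref{eq-cartan}, $\langle\nabla_T V,T\rangle=0$ and the definition of $\rho_V$ gives $\langle\nabla_T V,N\rangle=\tfrac1a(\varphi_V'+\rho_V)$) produces (a). Feeding $X=T$ into the commutation formula, with $\nabla_T T=aW_1$ and (a), and regrouping via \eqref{eq-notacion1}, gives (b) (the $W_1$-component collapses to $0$ once $\alpha_V$ is substituted). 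Feeding $X=W_1$, with $\nabla_T W_1=-k_1T+a\varepsilon_1 N$: the left side is $-(k_1)_t|_{t=0}T-k_1\tfrac{DT}{\partial t}|_{t=0}+a\varepsilon_1\tfrac{DN}{\partial t}|_{t=0}$, while the right side is already known from (a), (b) and \eqref{eq-cartan}; since $\partial_t\langle N,N\rangle=0$ forces the $T$-component of $DN/\partial t|_{t=0}$ to vanish, comparing $T$-components isolates $(k_1)_t|_{t=0}$ and — after expanding $\beta_V$ and $\alpha_V$ — yields (e), while the $W_1$-, $W_2$- and $N$-components supply the remaining three components of $DN/\partial t|_{t=0}$, i.e.\ (c). Then (d) is purely algebraic: with $DT/\partial t$, $DW_1/\partial t$, $DN/\partial t$ at $t=0$ known, differentiate $\langle W_2,T\rangle=\langle W_2,W_1\rangle=\langle W_2,N\rangle=0$ and $\langle W_2,W_2\rangle=\varepsilon_2$ at $t=0$ and read off the four frame components of $DW_2/\partial t|_{t=0}$. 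Finally (f) follows by feeding $X=W_2$ into the commutation formula, with $\nabla_T W_2=k_2T$ and (a), (d), and comparing $T$-components to isolate $(k_2)_t|_{t=0}$ (expanding $\delta_V$ then gives the stated form); alternatively $X=N$ together with comparison of $W_1,W_2$-components works as a cross-check.

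None of the steps is conceptually deep; the bulk of the work is the bookkeeping of third-order-in-$\sigma$ expressions and their regrouping into $\alpha_V,\beta_V,\delta_V$. The single point that is \emph{not} pure algebra arises in (e): besides the term $-\tfrac1a Gg_V'$ coming from $\beta_V'$, the $X=W_1$ computation also produces a curvature contribution $G\varepsilon_1 h_V$ (the $T$-component of $G(\langle W_1,T\rangle V-\langle W_1,V\rangle T)=-G\varepsilon_1 h_V\,T$), and these two terms merge into the stated $-\tfrac2a Gg_V'$ precisely because $a\varepsilon_1 h_V+g_V'=0$, which is the relation obtained by writing $\langle\nabla_T V,T\rangle=0$ (Lemma~\ref{lema-variation1}) out in the Cartan frame; this is the step I expect to be the main obstacle, in the sense that missing it leaves a spurious curvature term in (e). One must also take care never to drop the factor $1/\eta$ in $\nabla_T=\tfrac1\eta\nabla_{\partial_s}$: it is $\partial_t(1/\eta)|_{t=0}=\tfrac1{2a}\rho_V$ that generates every $\rho_V$-term throughout (a)--(f).
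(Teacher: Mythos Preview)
Your proposal is correct and follows essentially the same route as the paper: both arguments rest on the single commutation identity $\nabla_V\nabla_T X=\nabla_T\nabla_V X+\tfrac{1}{2a}\rho_V\nabla_T X-R(V,T)X$ (which is exactly your displayed formula once $[V,T]=\tfrac{1}{2a}\rho_V T$ and \eqref{eq-constant-curvature} are inserted), applied successively to $X=T,W_1,N$, with the vanishing of the $T$-component of $\nabla_V N$ and of the $W_2$-component of $\nabla_V W_2$ isolating $V(k_1)$ and $V(k_2)$.

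Two small differences are worth noting. First, for (d) the paper stays with the commutation route, computing $\varepsilon_2 k_2\nabla_V W_2$ from $\nabla_V\nabla_T N$ and thereby obtaining (d) and (f) in a single stroke; your orthogonality argument for (d) is of course equally valid and arguably cleaner. Second, the point you flag as the ``main obstacle'' in (e) --- that the curvature contribution $-G\varepsilon_1 h_V$ from $-R(V,T)W_1$ and the $-\tfrac{G}{a}g_V'$ from $\beta_V'$ combine into $-\tfrac{2G}{a}g_V'$ only via the relation $a\varepsilon_1 h_V+g_V'=0$ (equivalently $\langle\nabla_T V,T\rangle=0$) --- is precisely what the paper does as well, but it does so silently, writing $\tfrac{G}{a}g_V'$ in place of $-G\varepsilon_1 h_V$ in the $T$-component of $a\varepsilon_1\nabla_V N$ without comment. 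Your making this explicit is an improvement in exposition.
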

\begin{proof}
	Set $\nabla_V =\frac{D}{\partial t}$ the covariant derivative. From equation \eqref{eq-variation1} we obtain  
	\begin{equation}\label{A}
		\begin{aligned}
				\nabla_V T &= -\left\langle \nabla_V T,N\right\rangle T+\varepsilon_1 \left\langle\nabla_V T,W_1 \right\rangle W_1+\varepsilon_2 \left\langle \nabla_V T,W_2  \right\rangle W_2 \\
						&= -\left\langle \nabla_T V,N  \right\rangle T+\dfrac{1}{2a}\rho_V T+ \varphi_V W_1+ \psi_V W_2 \\
						&= -\dfrac{1}{a}\left(\varphi_V'+\dfrac{1}{2}\rho_V\right) T + \varphi_V W_1+ \psi_V W_2 \\
						&= - \alpha_V T +  \varphi_V W_1 +  \psi_V W_2,
		\end{aligned}
	\end{equation}
	where it has been used $\langle\nabla_T V,N\rangle =\frac{1}{a}\left(\varphi_V ' +\rho_V\right)$. Now, taking into account the formulas \eqref{eq-curvature-tensor}, \eqref{eq-constant-curvature} and \eqref{A}  we have that
	\begin{equation}\label{B}
		\begin{aligned}
				\nabla_V W_1 &= \dfrac{1}{a}\nabla_V\nabla_T T =\dfrac{1}{a}\left(\nabla_T\nabla_V T + \nabla_{[V,T]} T - R(V,T)T\right) \\
						&= \dfrac{1}{a}\left[\nabla_T \left(-\alpha_V T + \varphi_V W_1 + \psi_VW_2\right) +\dfrac{1}{2a}\rho_V\nabla_T T +Gg_VT\right] \\
						&= \dfrac{1}{a}\left[\left(-\alpha_V '-k_{1}\varphi_V +   k_{2}\psi +Gg_V \right) T +a\varepsilon_1\varphi_V N +\psi_V 'W_2\right] \\
						&= -\beta_V T + \varepsilon_1\varphi_V N + \dfrac{1}{a}\psi_V' W_2.
		\end{aligned}
	\end{equation}	
	Considering again \eqref{eq-curvature-tensor}, \eqref{eq-constant-curvature}, \eqref{A} and \eqref{B} we deduce
	\begin{equation}
		\begin{aligned}
				a\varepsilon_1 \nabla_V N &= \nabla_V\nabla_T W_1 +\nabla_V (k_{1} T) \\
				& =\nabla_T\nabla_V W_1 + \nabla_{[V,T]} W_1 - R(V,T)W_1 + V(k_{1})T+k_{1}\nabla_V T \\
						&= \left(-\beta_V '+\dfrac{1}{a} k_{2}\psi_V'-\dfrac{1}{2a}k_{1}\rho_V- k_{1}\alpha_V+V(k_{1})+\dfrac{G}{a}g_V '\right) T \\
						&\quad -a\beta _V W_1+a\varepsilon_1\alpha_V N+ \delta_VW_2. 
		\end{aligned}
	\end{equation}	
	Since $\left\langle \nabla_V N,N \right\rangle=0$, the tangent component of $\nabla_V N$ vanishes, and the expression for $V(k_{1})$ becomes
		\begin{equation}\label{vk}
		\begin{aligned}
				V(k_{1})  &= 
		\dfrac{1}{a^2}\varphi_V ''' +\dfrac{1}{a}\left[\left(k_{1}\varphi_V\right)'+k_{1}\varphi_V '\right] 
		-\dfrac{1}{a}\left[\left( k_{2}\psi_V\right)'+ k_{2}\psi_V '\right] 
		+\dfrac{1}{a}\left[\dfrac{1}{2a}\rho_V'' +k_{1}\rho_V -2Gg_V'\right].
		\end{aligned}
	\end{equation}
	As a consequence the vector field $\nabla_V N$ boils down to
	\begin{equation}\label{C}
		\begin{aligned}
			\nabla_V N &= -\varepsilon_1 \beta_V W_1 +  \alpha_V N + \dfrac{\varepsilon_1 }{a}\delta_V W_2.	
		\end{aligned}
	\end{equation}
	Finally, a similar computation leads to
	\begin{equation}
		\begin{aligned}
				\varepsilon_2 k_{2} \nabla_V W_2 &= \nabla_V\nabla_T N +\varepsilon_1 V(k_{1})W_1 + \varepsilon_1 k_{1}\nabla_V (W_1) -\varepsilon_2 V( k_{2})W_2  \\
						&= \dfrac{\varepsilon_1 }{a} k_{2}\delta_V T + \left(\varepsilon_1 V(k_{1})-\varepsilon_1 \beta_V' -\varepsilon_1\frac{k_{1}}{a}\left(\varphi_V ' + \rho_V\right)+\dfrac{\varepsilon_1}{a}Gg_V '\right) W_1  \\
						&\quad + \left(\alpha_V ' -a\beta_V+ k_{1}\varphi_V -Gg_V \right)N \\
						& \hspace{0.3cm}+ \left(\dfrac{\varepsilon_2}{a} k_{2}\left(\varphi_V ' + \rho_V\right)+\dfrac{\varepsilon_1 }{a}\delta_V ' + \dfrac{\varepsilon_1 }{a}k_{1}\psi_V ' -\varepsilon_2 V( k_{2})-Gl_V\right) W_2.
		\end{aligned}
	\end{equation}	
	In the same way, since the component of $W_2$ in $\nabla_V W_2$ vanishes, we deduce
	\begin{equation}\label{vt}
		\begin{aligned}
				V( k_{2}) &= \dfrac{\varepsilon_1\varepsilon_2}{a^2}\psi_V ''' +\dfrac{\varepsilon_1\varepsilon_2}{a}\left[\left(k_{1}\psi_V\right)'+k_{1}\psi_V '\right] 
		+\dfrac{1}{a}\left[\left( k_{2}\varphi_V\right)'+ k_{2}\varphi_V '\right] 
		+\dfrac{1}{a} k_{2}\rho_V -\varepsilon_2 Gl_V
		\end{aligned}
	\end{equation}
	and
	\begin{equation}\label{D}
		\begin{aligned}
			\nabla_V W_2 &= \dfrac{\varepsilon_1\varepsilon_2}{a}\delta_V T -\dfrac{\varepsilon_1\varepsilon_2}{a}\psi_V' W_1 + \varepsilon_2\psi_V N.
		\end{aligned}
	\end{equation}
\end{proof}

Consider $\Lambda$ the space of pseudo arc-length parametrized null curves in $\M$. For $\gamma\in \Lambda$, it is easy to see that $T_\gamma\Lambda$ is the set of all vector fields associated with variations of pseudo arc-length parametrized null curves starting from $\gamma$. It is clear that a vector field in $T_\gamma\Lambda$ locally preserves the causal character and the pseudo arc-length parameter. The converse can also be proved applying a similar procedure as in \cite{musso_hamiltonian_2010}.

\begin{proposition}\label{prop-symplectic}
	A vector field $V$ along $\gamma\in\Lambda$ is tangent to $\Lambda$ if and only if it locally preserves the causal character and the pseudo arc-length parameter, that is,
	\begin{equation}\label{eq-tlambda}
		T_{\gamma}\Lambda = \{V\in\mathfrak{X}(\gamma): \langle\nabla_T V,T\rangle = \langle\nabla_T^2V,W_1\rangle =0\}.
	\end{equation} 
	Consequently, if $\v$ is expressed by $V=f_V T+h_V W_1 + g_V N + l_V W_2$ where $f_V, h_V , g_V$ and $l_V$  are smooth functions, then $\v\in T_\gamma\Lambda$ if and only if  
	\begin{equation}\label{eq-compatibilidad} 
		f_{V}=-\frac{1}{2a}\left[h'_{V}+a k_{1}D_{\sigma}^{-1}\left(h_{V}\right)-aD_{\sigma}^{-1}\left(k_{1}h_{V}- k_{2} l_{V}\right)\right], \qquad
		g_{V}=-\varepsilon_{1}aD_{\sigma}^{-1}\left(h_{V}\right),
	\end{equation}
	where $D_\sigma^{-1}$ is a formal indefinite $\sigma$-integral. Furthermore, 
	\begin{equation}\label{eq-symplectic}
		\begin{pmatrix} \varphi_V  \\ \psi_V \\ \end{pmatrix}=
		    \begin{pmatrix} \frac{a}{2} \left(\frac{1}{a}D_{\sigma}+k_{1}D_{\sigma}^{-1}+D_{\sigma}^{-1} k_{1}\right) &  &  & -\frac{a}{2}D_{\sigma}^{-1} k_{2} \\ -\varepsilon_1 \varepsilon_2 a k_{2} D_{\sigma}^{-1} &  &  & D_{\sigma} \\ \end{pmatrix}	\cdot     
		    \begin{pmatrix} h_V  \\ l_V \\ \end{pmatrix}             
	\end{equation}	
\end{proposition}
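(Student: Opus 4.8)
The plan is to identify $T_\gamma\Lambda$ explicitly using Lemma~\ref{lema-variation1} and then translate the vanishing conditions into the stated formulas. First I would prove the set-equality \eqref{eq-tlambda}. The inclusion ``$\subseteq$'' is immediate from Lemma~\ref{lema-variation1}: a variation through pseudo arc-length parametrized null curves has $\rho_V = 0$ (since $\left.\partial\eta/\partial t\right|_{t=0}=0$) and $\langle\nabla_T V,T\rangle = 0$, and $\rho_V = -\varepsilon_1\langle\nabla_T^2 V,W_1\rangle$, giving both conditions. For the converse inclusion, given $V$ with $\langle\nabla_T V,T\rangle = \langle\nabla_T^2 V,W_1\rangle = 0$, one constructs a variation realizing $V$; this is where I would invoke the argument of \cite{musso_hamiltonian_2010} referenced just before the statement. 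The idea there is to flow the curvatures $k_1,k_2$ by the evolution equations dictated by items (e)--(f) of Lemma~\ref{lema-variation2} with $\rho_V = 0$, use the fundamental theorem for null curves to reconstruct a family $\gamma(\cdot,t)$, and check that its variation vector field is $V$ up to the ambiguity of semi-Riemannian isometries; since $\rho_V=0$ forces $[V,T]=0$, the parametrization stays pseudo arc-length. I expect this reconstruction step to be the main obstacle, as it requires care with the initial conditions and with the isometry ambiguity, though it is essentially routine once one mirrors \cite{musso_hamiltonian_2010}.

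Next I would derive the compatibility conditions \eqref{eq-compatibilidad}. Writing $V = f_V T + h_V W_1 + g_V N + l_V W_2$, compute $\nabla_T V$ using the Cartan equations \eqref{eq-cartan}. The coefficient of $T$ in $\nabla_T V$ is $f_V' - k_1 h_V - \varepsilon_1 k_1 g_V \cdot(\text{something})$; more precisely I would collect terms carefully to get $\langle\nabla_T V,T\rangle$ and $\langle\nabla_T V,N\rangle$, then $\langle\nabla_T^2 V,W_1\rangle$. Setting $\langle\nabla_T V,T\rangle = 0$ gives one first-order ODE relating $f_V,g_V,h_V,l_V$; computing $\nabla_T N$-components shows $\langle\nabla_T V,N\rangle$ involves $g_V'$ and $h_V$, and the condition $\langle\nabla_T^2 V,W_1\rangle = 0$ (equivalently $\rho_V = 0$) yields a second relation. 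Solving this linear system for $f_V$ and $g_V$ in terms of $h_V$ and $l_V$ — introducing $D_\sigma^{-1}$ to undo the derivatives — produces \eqref{eq-compatibilidad}. The appearance of $a$ and $\varepsilon_1$ is bookkeeping from \eqref{eq-cartan}.

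Finally, for \eqref{eq-symplectic} I would substitute the expressions \eqref{eq-compatibilidad} for $f_V$ and $g_V$ into the definitions $\varphi_V = \varepsilon_1\langle\nabla_T V,W_1\rangle$ and $\psi_V = \varepsilon_2\langle\nabla_T V,W_2\rangle$ from \eqref{eq-notacion}. Using the Cartan equations, $\nabla_T V$ expands as a combination of $T,W_1,N,W_2$ whose $W_1$- and $W_2$-coefficients give $\varphi_V$ and $\psi_V$ directly in terms of $h_V', l_V', f_V, g_V$ and the curvatures; plugging in \eqref{eq-compatibilidad} and simplifying the resulting $D_\sigma^{-1}$-expressions yields the claimed matrix. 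I would double-check the symmetry of the $(1,1)$-entry $\tfrac{a}{2}(\tfrac1a D_\sigma + k_1 D_\sigma^{-1} + D_\sigma^{-1}k_1)$ against the known symplectic operator \eqref{eq-symplectic-HS} for the HS-cKdV system as a consistency test. No step here is deep; the only real care needed is in the sign and $\varepsilon_i$ bookkeeping and in the algebra of the $D_\sigma^{-1}$ terms, so the heart of the proposition is the realization argument for the converse inclusion in \eqref{eq-tlambda}.
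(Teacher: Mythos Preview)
Your proposal is correct and follows essentially the same approach as the paper: forward inclusion from Lemma~\ref{lema-variation1}, explicit computation of $\nabla_T V$ and $\nabla_T^2 V$ in the Cartan frame to extract the two scalar conditions, solving for $f_V,g_V$ to get \eqref{eq-compatibilidad}, and substituting into $\varphi_V,\psi_V$ for \eqref{eq-symplectic}. The one place where the paper is more concrete than your sketch is the converse inclusion: rather than describing it as ``flow the curvatures and reconstruct via the fundamental theorem,'' the paper writes down explicit matrices $K_\gamma$, $P$, $F$, $C$ satisfying $\partial_\sigma F = F\cdot K_\gamma$ and $\partial_\sigma P = C - [K_\gamma,P]$ (a Lax-type compatibility condition encoding the Cartan equations and the variation formulas of Lemma~\ref{lema-variation2}), and then invokes the procedure of \cite{musso_hamiltonian_2010} to integrate this system into a genuine null variation with variation field $V$. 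This matrix setup is what makes the reference to \cite{musso_hamiltonian_2010} actually applicable, so you would want to make it explicit rather than leave the reconstruction at the level of the fundamental theorem alone.
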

\begin{proof}
	For a generic vector field $V$ we obtain:
	 \begin{align} 
		&\begin{aligned}
		\nabla_T V & =  \left(f_V '-k_{1} h_V + k_{2} l_V\right)T + \left(af_V +h_V ' -\varepsilon_1 k_{1} g_V\right)W_1  \\
			&\quad + \left(\varepsilon_1 ah_V +g_V '\right)N + \left(l_V ' +\varepsilon_2 k_{2} g_V\right)W_2;  
			\label{eq1-derivadas}
		\end{aligned} \\
		&\begin{aligned}
			\nabla_{T}^2 V &= \left[\left(f_V '-k_{1} h_V + k_{2} l_V\right)' -k_{1}\left(af_V+h_V '-\varepsilon_1 k_{1} g_V\right)+ k_{2}\left(l_V ' +\varepsilon_2 k_{2} g_V\right)\right]T \\
				&\quad  +  \left[a\left(f_V '-k_{1} h_V + k_{2} l_V\right) +\left(af_V+h_V '-\varepsilon_1 k_{1} g_V\right)' -\varepsilon_1 k_{1}\left(\varepsilon_1 ah_V + g_V '\right)\right]W_1 \\
				&\quad +  \left[a\varepsilon_1\left(af_V + h_V ' -\varepsilon_1 k_{1} g_V\right) +\left(\varepsilon_1 ah_V + g_V '\right)' \right]N  \\
				&\quad +  \left[\varepsilon_2  k_{2}\left(\varepsilon_1 ah_V + g_V ' \right) +\left(l_V ' + \varepsilon_2 k_{2} g_V \right)' \right] W_2. 
				\label{eq2-derivadas}
		\end{aligned}
	\end{align} 
	If $V\in T_{\gamma}\Lambda$, Lemma \ref{lema-variation1} implies that 
	\begin{equation}\label{eq-compatibilidad2}
			\langle\nabla_T V,T\rangle = \langle\nabla_T^2V,W_1\rangle =0.
	\end{equation}
	In such a case, by using equations \eqref{eq1-derivadas} and \eqref{eq2-derivadas} we deduce that  
	$$\varepsilon_1 ah_V +g'_V=0 \quad\text{and}\quad 2a f_V'+h_{V}''-\varepsilon_{1}(k_{1}g_{V})'-a k_{1} h_V+a k_{2} l_V=0.$$
	Last equations easily give rise to \eqref{eq-compatibilidad}. Expression \eqref{eq-compatibilidad} becomes
	$$	f_V = -\dfrac{1}{2a}h_V ' - \dfrac{1}{2}k_{1}D_{\sigma}^{-1}\left(h_V\right) + \dfrac{1}{2}D_{\sigma}^{-1}\left(k_{1}h_V\right)-\dfrac{1}{2}D_{\sigma}^{-1}( k_{2} l_V),$$
	and the following holds
	\begin{equation}\label{formulas-compatibilidad}
			\varphi_V =\varepsilon_1 \langle \nabla_T V, W_1\rangle = af_V+h_V ' -\varepsilon_1k_{1}g_V, \qquad 
			\psi_V =\varepsilon_2 \langle \nabla_T V, W_2 \rangle = l_V ' +\varepsilon_2 k_{2} g_V.
	\end{equation}
	Replacing  $f_V$ into \eqref{formulas-compatibilidad} and rearranging terms, we easily obtain \eqref{eq-symplectic}. Conversely, if $V$ is a vector field verifying \eqref{eq-compatibilidad2}, then it arises from an infinitesimal variation of null curves.  Indeed, according to the Cartan equations \eqref{eq-cartan}, Lemma \ref{lema-variation2} and formulas \eqref{eq-compatibilidad} and \eqref{formulas-compatibilidad}, we consider the matrices
	\begin{equation*}
		K_{\gamma}=\begin{pmatrix}0&0&0&G&0\\1&0&-k_{1}&0&k_{2}\\0&a&0&-\varepsilon_{1}k_{1}&0\\0&0&\varepsilon_{1}a&0&0\\0&0&0&\varepsilon_{2}k_{2}&0\end{pmatrix},\quad 
		P=\begin{pmatrix}0&G g_{\v}&-\varepsilon_{1}G h_{\v}&G f_{\v}&-\varepsilon_{2}G l_{\v} \\ f_{\v}&-\alpha_{\v}&-\beta_{\v} & 0 & \frac{1}{a}\varepsilon_{1}\varepsilon_{2}\delta_{\v} \\ h_{\v} & \varphi_{\v} & 0 & -\varepsilon_{1}\beta_{\v} & -\frac{\varepsilon_{1}\varepsilon_{2}}{a}\psi'_{\v} \\ g_{\v} & 0 & \varepsilon_{1}\varphi_{\v} & \alpha_{\v} & \varepsilon_{2}\psi_{\v} \\ l_{\v} & \psi_{\v} & \frac{1}{a}\psi'_{\v} & \frac{\varepsilon_{1}}{a}\delta_{\v} & 0 \end{pmatrix},
	\end{equation*}
	verifying
	\begin{equation*}
		\frac{\partial F}{\partial \sigma}=F\cdot K_{\gamma}, \qquad \frac{\partial P}{\partial \sigma}=C-[K_{\gamma},P],
	\end{equation*}
	where
	\begin{equation*}
		F=\begin{pmatrix}\gamma & T & W_{1} & N & W_{2}\end{pmatrix}, \qquad C=\begin{pmatrix}0&0&0&0&0\\0&0&-V(k_{1})&0&V(k_{2})\\0&0&0&-\varepsilon_{1}V(k_{1})&0\\0&0&0&0&0\\0&0&0&\varepsilon_{2}V(k_{2})&0\end{pmatrix}.
	\end{equation*}
	Following the same procedure as described in Lemma 1 of \cite{musso_hamiltonian_2010} we can construct a null curve variation of $\gamma$ whose variation vector field is $\v$.
\end{proof}
From Proposition \ref{prop-symplectic}, a tangent vector field  $V\in T_{\gamma}\Lambda$ and its covariant derivative $\nabla_TV$ are expressed by
\begin{equation}\label{formula-DV} 
	\begin{aligned}
		V &= -\frac{1}{2a}\left[h'_{V}+ak_{1}D_{\sigma}^{-1}\left(h_{V}\right) -aD_{\sigma}^{-1}\left(k_{1}h_{V}- k_{2} l_{V}\right)\right] T \\
		&\quad +h_{V} W_1 -\varepsilon_{1}aD_{\sigma}^{-1}\left(h_{V}\right) N + l_V W_2, \\
		\nabla_T V &= -\frac{1}{a}\varphi_V' T + \varphi_V W_1+\psi_V W_2.
	\end{aligned}
\end{equation}

\begin{remark}\label{constants-integration}
	Observe that a tangent vector field $V\in T_{\gamma}\Lambda$ is completely determined by the differential functions $h_V$ and $l_V$ and two constants, since the operator $D_{\sigma}^{-1}$ is used twice; once for obtaining $g_{V}$ from $h_{V}$ and once more for obtaining $f_{V}$ from $h_{V}$ and $l_{V}$. Therefore, given two differential functions $h_V$ and $l_V$ and two constants, we can construct a vector field locally preserving pseudo arc-length parameter along $\gamma$ whose projections on the screen bundle are precisely $h_V$ and $l_V$. Both constants could be determined or related if constraints on null curve variation are considered, but for our algebraic purposes, we will consider generic constants.
\end{remark}

\section{A Lie algebra structure on local vector fields}
Our objective now is to define a Lie algebra structure on the set of local vector fields which locally preserve the causal character. To this end, we need first to set up the spaces in which we are going to work. Let $k_{1}$ and $ k_{2}$ be smooth functions defined on an interval $I$ and set $\mathcal{P}$ the real algebra of polynomials in  $k_{1}$, $ k_{2}$ and their derivatives of arbitrary order, i.e.,
$$\mathcal{P}=\mathbb{R}\left[ k_{1}^{(m)},  k_{2}^{(n)} : (m,n)\in\mathbb{N}^{2}\right],$$ 
where $ k_{1}^{(m)}= k_{1}^{(m)}(\sigma)$ and $ k_{2}^{(n)}= k_{2}^{(n)}(\sigma)$.{}

Let $\gamma : I\rightarrow \M$ be a null curve with curvatures $k_{1}$ and $k_{2}$, and consider the set of vector fields along $\gamma$ whose components are polynomial functions
\begin{equation*}
	\mathfrak{X}_{\mathcal{P}}(\gamma)=\{V=f_V T + h_V W_1 + g_V N + l_V W_2 \in \mathfrak{X}(\gamma) : f_V , h_V , g_V, l_V \in\mathcal{P}\}.
\end{equation*}
An element of $\mathfrak{X}_{\mathcal{P}}(\gamma)$ will be called a $\mathcal{P}$-local vector field along $\gamma$. The set of $\mathcal{P}$-local vector fields (locally preserving the causal character) will be denoted by
\begin{equation*}
	\mathfrak{X}_{\mathcal{P}}^{*}(\gamma)=\{V=f_V T + h_V W_1 + g_V N + l_V W_2 \in \mathfrak{X}_{\mathcal{P}}(\gamma) : g_{V}=-\varepsilon_{1}aD_{\sigma}^{-1}\left(h_{V}\right)\},
\end{equation*}
and within it, the $\mathcal{P}$-local variation vector fields locally preserving pseudo arc-length parameter are described as
\begin{equation*}
	T_{\mathcal{P},\gamma}\Lambda=T_{\gamma}(\Lambda)\cap \mathfrak{X}_{\mathcal{P}}^{*}(\gamma)=\{V\in \mathfrak{X}_{\mathcal{P}}^{*}(\gamma) : f_{V}=-\frac{1}{2a}\left[h'_{V}+a k_{1}D_{\sigma}^{-1}\left(h_{V}\right)-a D_{\sigma}^{-1}\left(k_{1}h_{V}- k_{2} l_{V}\right)\right] \}.
\end{equation*}
In this context, from Proposition \ref{prop-symplectic} and taking into account Remark \ref{constants-integration}, we can explicitly calculate the $\mathcal{P}$-local pseudo arc-length preserving variation vector fields by means of its constants of integration.

\begin{proposition}\label{prop-constants}
	Let $V$ be a vector field in $\mathfrak{X}_{\mathcal{P}}(\gamma)$, then $V\in T_{\mathcal{P},\gamma}(\Lambda)$ if and only if it is fulfilled that
	\begin{equation*}
		\begin{aligned}
			g_{V}=-\varepsilon_{1}a\partial_{\sigma}^{-1}\left(h_{V}\right)+c_{1},\quad f_{V}=-\frac{1}{2a}\left[h'_{V}+a k_{1}\partial_{\sigma}^{-1}\left(h_{V}\right)-a\partial_{\sigma}^{-1}\left(k_{1}h_{V}- k_{2} l_{V}\right)-\varepsilon_{1}c_{1}k_{1}\right]+c_{2},
		\end{aligned}
	\end{equation*}
	where $c_{1},c_{2}$ are constants, and $\partial_{\sigma}^{-1}$ is the anti-derivative operator verifying that $\partial_{\sigma}^{-1}\circ \partial_{\sigma}=I$ when acting on $\mathcal{P}_0$.
\end{proposition}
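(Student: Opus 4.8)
The plan is to reduce everything to the pointwise characterisation of $T_\gamma\Lambda$ already isolated in the proof of Proposition~\ref{prop-symplectic}: a $\mathcal{P}$-local vector field $V=f_VT+h_VW_1+g_VN+l_VW_2$ belongs to $T_{\mathcal{P},\gamma}\Lambda$ if and only if the two identities
\[
	\varepsilon_1 a\,h_V+g_V'=0,\qquad 2a\,f_V'+h_V''-\varepsilon_1(k_1g_V)'-a\,k_1h_V+a\,k_2l_V=0
\]
hold in $\mathcal{P}$. The proposition then amounts to integrating these two relations inside the differential algebra $\mathcal{P}$, keeping explicit track of the integration constants and pinning down which antiderivative the symbol $D_\sigma^{-1}$ of Proposition~\ref{prop-symplectic} stands for. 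Two elementary facts about $\partial_\sigma=D_\sigma$ will be used throughout: its kernel on $\mathcal{P}$ consists exactly of the constants (a polynomial annihilated by $\partial_\sigma$ cannot involve any $k_i^{(m)}$), and its image is contained in $\mathcal{P}_0$, so that the antiderivative operator $\partial_\sigma^{-1}$ is unambiguously defined on $\mathrm{im}(\partial_\sigma)$ by $\partial_\sigma^{-1}\circ\partial_\sigma=I$ on $\mathcal{P}_0$.

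First I would integrate the first identity. Rewriting it as $\partial_\sigma\bigl(-\tfrac{1}{\varepsilon_1a}g_V\bigr)=h_V$ shows that $h_V\in\mathrm{im}(\partial_\sigma)$, hence $\partial_\sigma^{-1}(h_V)$ is meaningful; then $\partial_\sigma\bigl(g_V+\varepsilon_1a\,\partial_\sigma^{-1}(h_V)\bigr)=g_V'+\varepsilon_1a\,h_V=0$, so by the kernel fact the bracket equals a constant $c_1$, which is precisely the asserted formula for $g_V$.

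Next I would integrate the second identity. Since $h_V''$ and $(k_1g_V)'$ are total $\sigma$-derivatives, it rearranges to
\[
	\partial_\sigma\!\left(f_V+\tfrac{1}{2a}h_V'-\tfrac{\varepsilon_1}{2a}k_1g_V\right)=\tfrac12\bigl(k_1h_V-k_2l_V\bigr),
\]
which exhibits $k_1h_V-k_2l_V$ as an element of $\mathrm{im}(\partial_\sigma)$; applying $\partial_\sigma^{-1}$ and invoking the kernel fact once more gives $f_V=-\tfrac{1}{2a}h_V'+\tfrac{\varepsilon_1}{2a}k_1g_V+\tfrac12\partial_\sigma^{-1}(k_1h_V-k_2l_V)+c_2$ for a constant $c_2$. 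Substituting the formula for $g_V$ just obtained, using $\varepsilon_1^2=1$, and collecting terms yields exactly the stated expression for $f_V$, the summand $-\varepsilon_1c_1k_1$ being the trace of the constant $c_1$ sitting inside $g_V$. For the converse I would differentiate the two displayed formulas for $g_V$ and $f_V$ and check that the pair of pointwise identities is recovered; then Proposition~\ref{prop-symplectic} places $V$ in $T_\gamma\Lambda$, and since $V$ is $\mathcal{P}$-local by hypothesis, $V\in T_{\mathcal{P},\gamma}\Lambda$.

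I expect the only real subtlety to be the bookkeeping around $D_\sigma^{-1}$: one must argue that the quantities to which $\partial_\sigma^{-1}$ is applied — namely $h_V$ and $k_1h_V-k_2l_V$ — genuinely lie in $\mathrm{im}(\partial_\sigma)$, which is forced by the constraint equations rather than being automatic for an arbitrary element of $\mathcal{P}$, and that the two integration constants $c_1,c_2$ are independent and account for exactly the ambiguity already flagged in Remark~\ref{constants-integration}. Everything else is a routine integration by parts.
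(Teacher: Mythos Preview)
Your proposal is correct and follows exactly the approach the paper indicates: the paper does not spell out a proof of Proposition~\ref{prop-constants} but introduces it by the sentence ``from Proposition~\ref{prop-symplectic} and taking into account Remark~\ref{constants-integration}, we can explicitly calculate\ldots'', and your argument simply carries this out --- integrating the two constraint equations $\varepsilon_1 a\,h_V+g_V'=0$ and $2a\,f_V'+h_V''-\varepsilon_1(k_1g_V)'-a\,k_1h_V+a\,k_2l_V=0$ already derived in the proof of Proposition~\ref{prop-symplectic}, with the constants $c_1,c_2$ arising from $\ker\partial_\sigma=\mathbb{R}$. Your care in checking that $h_V$ and $k_1h_V-k_2l_V$ lie in $\mathrm{im}(\partial_\sigma)$ before applying $\partial_\sigma^{-1}$ is the only point requiring attention, and you handle it correctly.
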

Consequently, let us consider the set 
\begin{equation*}
	\mathcal{Q}=\left\{(h,l)\in \mathcal{P}_{0}^{2}: \exists (p,q)\in \mathcal{P}_{0}^{2} \text{ such that } (p',q')=(h,k_{1}h-k_{2}l) \right\}.
\end{equation*}
 Given a pair of functions $(h,l)\in \mathcal{Q}$ and two constants $c_{1}$ and $c_{2}$, we will denote by $\mathcal{X}(h,l)$ the $\mathcal{P}$-local pseudo arc-length preserving variation vector field
\begin{equation}
	\begin{aligned}
		\mathcal{X}	(h,l)&=\left(-\frac{1}{2a}\left[h'+ak_{1}\partial_{\sigma}^{-1}\left(h\right)-a\partial_{\sigma}^{-1}\left(k_{1}h- k_{2} l\right)-\varepsilon_{1}c_{1}k_{1}\right]+c_{2}\right)T \\
		&\quad +h W_{1}+\left(-\varepsilon_{1}a\partial_{\sigma}^{-1}\left(h\right)+c_{1}\right)N+lW_{2}.
	\end{aligned}
\end{equation} 

\begin{example}\label{example-constants}
	Consider the pair of functions $(h,l)=(0,0)$, then $\mathcal{X}(0,0)=\left(\frac{\varepsilon_{1}c_{1}}{2a}k_{1}+c_{2}\right)T+c_{1}N$. In particular, if we take $(c_{1},c_{2})=(0,b)$ and $(c_{1},c_{2})=(-2\varepsilon_{1}a^{2}c,0)$, where $b$ and $c$ are constants, one obtains the vector fields
	\begin{equation*}
		V_{0}=b T,\quad V_{1}=-a c k_{1} T-2\varepsilon_{1}a^{2}c N.
	\end{equation*}
	The vector fields $V_{0}$ and $V_{1}$ will be the starting point of the commuting hierarchy of symmetries in section \ref{hamiltonian-structure}. 
\end{example}

Note that to introduce the concept of symmetry (and so a recursion operator) and furnish the phase space of null curve motions with a formal variational calculus in section \ref{hamiltonian-structure}, an appropriate Lie bracket on the set of local vector fields should be defined. To this end, we first introduce a convenient derivation on both the differential algebra and the local vector fields along a null curve. Motivated by \cite{yasui_differential_1998} and bearing in mind Lemma \ref{lema-variation2}, given $\v\in \mathfrak{X}^*_{\mathcal{P}}(\gamma)$, we denote by $D_\v:\mathfrak{X}_{\mathcal{P}}(\gamma)\rightarrow \mathfrak{X}_{\mathcal{P}}(\gamma)$ the unique tensor derivation fulfilling:
	\begin{align}
			V(f') &= V(f)' +\dfrac{1}{2a}\rho_V f' \quad \mbox{for all} \quad f\in\mathcal{P};
			\label{eq-derivation1}
			\\
			\begin{pmatrix} V(k_{1})  \\ V( k_{2}) \end{pmatrix}  &=  \begin{pmatrix}
			\frac{1}{a}\left(\frac{1}{a}D_{\sigma}^3+k_{1}D_{\sigma}+D_{\sigma}k_{1}\right) & -\dfrac{1}{a}\left(D_{\sigma} k_{2} +  k_{2} D_{\sigma} \right) \\
			\frac{1}{a}\left(D_{\sigma} k_{2} +  k_{2} D_{\sigma} \right) & \dfrac{\varepsilon_1 \varepsilon_2}{a}\left(\frac{1}{a}D_{\sigma}^3+k_{1}D_{\sigma}+D_{\sigma}k_{1}\right) \\
			\end{pmatrix} \cdot \begin{pmatrix}	\varphi_V  \\	\psi_V \\  \end{pmatrix} \label{eq-cosymplectic} \\
			&\quad +\begin{pmatrix}	\frac{1}{a}\left(\frac{1}{2a}\rho_V '' + k_{1}\rho_V -2Gg_V '\right)	\\	\frac{1}{a} k_{2}\rho_V-\varepsilon_2 Gl_V \end{pmatrix}; \nonumber
			\\
			\begin{pmatrix} D_V T  \\	D_V W_1 \\	D_V N \\ D_V W_2 \end{pmatrix}
			&=  \begin{pmatrix}	- \alpha_V & \varphi_V & 0 &  \psi_V \\ -\beta_V & 0 & \varepsilon_1 \varphi_V & \frac{1}{a}\psi_V' \\
			0 & -\varepsilon_1\beta_V & \alpha_V & \frac{\varepsilon_1 }{a}\delta_V \\	\frac{\varepsilon_1 \varepsilon_2}{a}\delta_V & -\frac{\varepsilon_1 \varepsilon_2}{a}\psi_V' & \varepsilon_2 \psi_V & 0 \end{pmatrix} \cdot 
			\begin{pmatrix}	T  \\ W_1 \\  N \\ W_2 \end{pmatrix}; 
			\label{eq-derivation3}			
	\end{align}
where $\alpha_V, \beta_V$ and $\delta_V$ are given in \eqref{eq-notacion1}. We now restrict our definition of Lie bracket only on the set $\mathfrak{X}_{\mathcal{P}}^{*}(\gamma)$, which will be enough for our purposes.

\begin{proposition}\label{prop-lie-bracket}
	Let $\gamma$ be a null curve in $\Lambda$ and consider $[\cdot,\cdot]_{\gamma}:\mathfrak{X}^*_{\mathcal{P}}(\gamma)\times \mathfrak{X}^*_{\mathcal{P}}(\gamma)\rightarrow \mathfrak{X}^*_{\mathcal{P}}(\gamma)$ the map given by
	$$[\v_{1},\v_{2}]_{\gamma}=D_{\v_{1}}\v_{2}-D_{\v_{2}}\v_{1}.$$  Then:
	\begin{enumerate}[(a)]
		\item $[,]_\gamma$ is well defined, i.e., if $\v_{1},\v_{2}\in\mathfrak{X}^*_{\mathcal{P}}(\gamma)$ then $[\v_{1},\v_{2}]_\gamma\in \mathfrak{X}^*_{\mathcal{P}}(\gamma)$.
		\item $[\v_{1},\v_{2}]_{\gamma}(f)=\v_{1}\v_{2}(f)-\v_{2}\v_{1}(f)$ for all $f\in \mathcal{P}$.
		\item $[,]_{\gamma}$ is skew-symmetric.
		\item For $V_1,V_2\in \mathfrak{X}^*_{\mathcal{P}}(\gamma)$ and $\u\in \mathfrak{X}_{\mathcal{P}}(\gamma)$ we have 
		\begin{equation*}\label{formula-curvatura}
			D_{[V_1,V_2]}U-D_{V_1}D_{V_2}U+D_{V_2}D_{V_1}U=G \left(\left\langle U,V_{1} \right\rangle V_{2}-\left\langle U,V_{2} \right\rangle V_{1}\right).
		\end{equation*}
		\item $[,]_{\gamma}$ satisfies the Jacobi identity.
		\item $[,]_{\gamma}$ is closed for elements in $T_{\mathcal{P},\gamma}(\Lambda)$, i.e., if $\v,\u \in T_{\mathcal{P},\gamma}(\Lambda)$, then $[\v,\u]_{\gamma}\in T_{\mathcal{P},\gamma}(\Lambda)$.
	\end{enumerate}
\end{proposition}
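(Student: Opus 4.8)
The plan is to verify each of the six assertions in turn, exploiting the fact that the tensor derivation $D_{\v}$ was *defined* precisely by the formulas of Lemma \ref{lema-variation2}, so that $D_{\v}$ reproduces (at $t=0$) the geometric covariant derivative $\tfrac{D}{\partial t}$ associated with a variation whose variation vector field is $\v$. For part (a), I would write $\v_1=f_1T+h_1W_1+g_1N+l_1W_2$, $\v_2=f_2T+h_2W_1+g_2N+l_2W_2$ with $g_i=-\varepsilon_1 a\,\partial_\sigma^{-1}(h_i)$, compute the $W_1$- and $N$-components of $D_{\v_1}\v_2-D_{\v_2}\v_1$ using \eqref{eq-derivation3} and \eqref{eq-derivation1}, and check that the relation $g_{[\v_1,\v_2]}=-\varepsilon_1 a\,\partial_\sigma^{-1}(h_{[\v_1,\v_2]})$ persists. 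The key subtlety is that $D_{\v}$ does *not* commute with $\partial_\sigma=D_\sigma$; instead \eqref{eq-derivation1} gives $\v(f')=\v(f)'+\tfrac{1}{2a}\rho_\v f'$, which is exactly the commutation defect $[\v,\t]=\tfrac{1}{2a}\rho_\v\t$ from Lemma \ref{lema-variation1}. This twisted Leibniz rule must be tracked carefully through every integration by parts.

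Part (b) is the statement that $[\v_1,\v_2]_\gamma$ acts on $\mathcal{P}$ as the commutator of the two derivations; this follows from \eqref{eq-derivation1} together with the fact that the action of $D_{\v}$ on a scalar $f\in\mathcal{P}$ is determined by its action on the generators $k_1,k_2$ via \eqref{eq-cosymplectic} and the twisted Leibniz rule — essentially the same computation as the standard fact $[\partial_A,\partial_B]=\partial_{[A,B]}$ recalled in Section \ref{preliminares}, adapted to account for the $\rho_\v$-twist. Part (c), skew-symmetry, is immediate from the definition $[\v_1,\v_2]_\gamma=D_{\v_1}\v_2-D_{\v_2}\v_1$. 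For part (d), the curvature-type identity, I would simply observe that $D_{\v}$ plays the role of a covariant derivative $\nabla_{\v}$ along the $t$-direction while $\v_1,\v_2$ commute in the appropriate sense modulo the bracket; the formula is then the definition \eqref{eq-curvature-tensor} of the curvature tensor combined with the constant-curvature expression \eqref{eq-constant-curvature}, checked componentwise on $U=T,W_1,N,W_2$ using \eqref{eq-derivation3} (and the fact that $\langle T,N\rangle=-1$, $\langle W_i,W_i\rangle=\varepsilon_i$, all other inner products zero).

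Part (e), the Jacobi identity, I expect to deduce from (d) rather than by a brute-force triple expansion: writing $R_\gamma(\v_1,\v_2)U$ for the left-hand side of (d), the bracket $[\cdot,\cdot]_\gamma$ together with the "connection" $D$ behaves like a Lie-algebra action with curvature $R_\gamma$, and since $R_\gamma$ is the restriction of the ambient constant-curvature tensor it satisfies the first Bianchi identity; the standard argument then forces $\mathrm{Jac}(\v_1,\v_2,\v_3)$ to act as zero on every frame vector and on every $f\in\mathcal{P}$ (using (b)), hence to vanish. Finally, part (f) is closedness of the bracket inside $T_{\mathcal{P},\gamma}\Lambda$: by Proposition \ref{prop-symplectic} membership in $T_\gamma\Lambda$ is the single condition $f_V=-\tfrac{1}{2a}[h_V'+ak_1D_\sigma^{-1}(h_V)-aD_\sigma^{-1}(k_1h_V-k_2l_V)]$ on top of the $\mathfrak{X}^*_{\mathcal{P}}$ condition already handled in (a); equivalently, $\v\in T_{\mathcal{P},\gamma}\Lambda$ iff $\rho_\v=0$ (Lemma \ref{lema-variation1}), and one checks $\rho_{[\v_1,\v_2]_\gamma}=\v_1(\rho_{\v_2})-\v_2(\rho_{\v_1})=0$ when $\rho_{\v_1}=\rho_{\v_2}=0$, using that $\rho_\v=-\varepsilon_1\langle\nabla_T^2\v,W_1\rangle$ and the variational formula for this quantity implicit in Lemma \ref{lema-variation2}. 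The main obstacle throughout is bookkeeping: the operator $D_\sigma^{-1}$ appears inside the components, $D_{\v}$ fails to commute with $D_\sigma$, and one must repeatedly integrate by parts while keeping the $\rho_\v$-, $\varepsilon_i$- and $G$-terms straight; organizing the computation around the twisted Leibniz rule \eqref{eq-derivation1} and the matrix form \eqref{eq-derivation3} is what keeps it tractable.
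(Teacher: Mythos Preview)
Your plan is correct and matches the paper's proof in every part: (a) by checking $g_{12}'=-a\varepsilon_1 h_{12}$ directly from the component formulas; (b) by reducing to the generators $k_1,k_2$; (c) immediate; (d) a frame-by-frame verification; (e) via the first Bianchi identity applied to the curvature expression in (d); and (f) via the identity $\rho_{[\v_1,\v_2]}=\v_1(\rho_{\v_2})-\v_2(\rho_{\v_1})$. The only point where your description underplays the effort is (b): the analogy with $[\partial_A,\partial_B]=\partial_{[A,B]}$ is morally right, but because $V(k_i)$ is given by the full cosymplectic formula \eqref{eq-cosymplectic} (depending on $\varphi_V,\psi_V,\rho_V,g_V,l_V$) rather than a bare $a_i$, the paper has to first compute $\varphi_{12},\psi_{12},\rho_{12}$ explicitly in terms of $\varphi_i,\psi_i,\rho_i$ and then expand $[V_1,V_2]_\gamma(k_1)$ and $[V_1,V_2]_\gamma(k_2)$ term by term---this is by far the longest part of the argument, not a formality.
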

\begin{proof} 
	Given two vector fields $\v_1=f_1 T + h_1 W_1 + g_1 N + l_1 W_2$ and $\v_2=f_2 T + h_2 W_1 + g_2 N + l_2 W_2$ in $\mathfrak{X}^*_{\mathcal{P}}(\gamma)$, set $\v_{12}=[\v_1,\v_2]_\gamma$. The components $g_{12}$ and  $h_{12}$ of $\v_{12}$ are given by
	\begin{equation*}\label{gh12}
		\begin{aligned}
			g_{12} &= \v_{1}(g_2)-\v_{2}(g_1)-\dfrac{1}{a}\left(\varphi_1 g_2' -\varphi_2 g_1'\right) + \left(\alpha_1 g_2 - \alpha_2 g_1\right) + \varepsilon_2 \left(l_2\psi_1 - l_1\psi_2\right), \\
			h_{12} &= \dfrac{\varepsilon_1}{a}\left(\v_{2}(g_1')-\v_{1}(g_2')\right)+\left(\varphi_1 f_2 -\varphi_2 f_1 \right) + \varepsilon_1\left(\beta_2 g_1 - \beta_1 g_2\right) + \dfrac{\varepsilon_1 \varepsilon_2}{a} \left(l_1\psi_2' - l_2\psi_1'\right).
		\end{aligned}
	\end{equation*}
	Since $\v_{1},\v_{2}\in\mathfrak{X}^*_{\mathcal{P}}(\gamma)$, they verify the formulas \eqref{eq-compatibilidad} and \eqref{formulas-compatibilidad} which, together with definitions of $\alpha_i$ and $\beta_i$, lead to the relation  $g'_{12}=-a\varepsilon_1 h_{12}$. The latter is the condition equivalent to $[\v_{1},\v_{2}]_{\gamma}\in\mathfrak{X}^*_{\mathcal{P}}(\gamma)$, thus proving \textit{(a)}. To prove \textit{(b)}, it is sufficient to check the same equality solely for the generators $k_{1}$ and $ k_{2}$ of the algebra $\mathcal{P}$. We shall calculate the expressions of $\varphi_{12}$, $\psi_{12}$ and $\rho_{12}$ (corresponding functions to the Lie bracket $[V_1,V_2]_{\gamma}$), by means of $\varphi_i , \psi_i$  and $\rho_i$ (corresponding functions to vector fields $V_i$).
	If $V=f_V T + h_V W_1 + g_V N + l_V W_2 \in\mathfrak{X}_{\mathcal{P}}(\gamma)$ is any vector field, we have:
			\begin{equation}\label{F}
				\begin{aligned}
				\varphi_V &=af_V + h_V ' -\varepsilon_1 k_{1}g_V, \\
				\psi_V &=l_V ' +\varepsilon_2 k_{2} g_V , \\
				\rho_V &= -af_V '+2a k_{1}h_V -a k_{2} l_V-\varphi_V ' +\varepsilon_1 k_{1}g_V', \\
				 h_V &=-\dfrac{\varepsilon_1}{a}g_V ' \; \text{if} \; V\in\mathfrak{X}^*_{\mathcal{P}}(\gamma).
				\end{aligned}
			\end{equation}
	Bearing in mind relations \eqref{F} and expressions of $V(k_{1})$ and $V( k_{2})$ obtained in Lemma \ref{lema-variation2}, we deduce:
			\begin{equation}\label{G}
				\begin{aligned}
				\varphi_{D_{V_1}V_2} &= af_{D_{V_1}V_2} + h_{D_{V_1}V_2}  ' -\varepsilon_1 k_{1}g_{D_{V_1}V_2} \\
						 & = V_1 (\varphi_2)-\dfrac{1}{2a}\rho_1\varphi_2 -\dfrac{1}{a}\varphi_1\rho_2 -\dfrac{1}{a}\varphi_1\varphi_2 ' -\dfrac{\varepsilon_1 \varepsilon_2}{a}\psi_1 '\psi_2 -\dfrac{\varepsilon_1}{a}Gg_1 'g_2; \\
				 \psi_{D_{V_1}V_2} &= l_{D_{V_1}V_2} ' +\varepsilon_2 k_{2} g_{D_{V_1}V_2} \\
						 & = V_1 (\psi_2) -\dfrac{1}{2a}\rho_1\psi_2 -\dfrac{1}{a}\rho_2\psi_1 +\dfrac{1}{a}\left(\psi_1 '\varphi_2 -\psi_1\varphi_2 ' \right) + Gl_1g_2; \\
				 \rho_{D_{V_1}V_2} &= -af_{D_{V_1}V_2} ' + 2a k_{1}h_{D_{V_1}V_2} -a k_{2} l_{D_{V_1}V_2} -\varphi _{D_{V_1}V_2} ' +\varepsilon_1 k_{1} g_{D_{V_1}V_2} ' \\
						 &= V_1 (\rho_2) -\dfrac{1}{a}\rho_1\rho_2 + \dfrac{1}{a}\left(\rho_1 '\varphi_2 + \varphi_1\rho_2 '\right) + \dfrac{1}{a}\left(\varphi_1 ''\varphi_2 + \varphi_2 ''\varphi_1\right)+2k_{1}\varphi_1\varphi_2 \\
						 & - k_{2}\left(\varphi_1\psi_2 + \varphi_2\psi_1\right)+\dfrac{\varepsilon_1\varepsilon_2}{a}\psi_1 ' \psi_2 ' + G\left(\dfrac{\varepsilon_1}{a}\left(g_1g_2 '' + g_2g_1 ''\right)+2\varepsilon_1 k_{1}g_1g_2 -a\varepsilon_2 l_1l_2\right).		 
				\end{aligned}
			\end{equation}
	Because of symmetry of formulas \eqref{G}, deleting terms with repeated factors and rearranging the other, we obtain
			\begin{equation}\label{H}
				\begin{aligned} 
				\varphi_{12} & = V_1 (\varphi_2)-V_2 (\varphi_1)+\dfrac{1}{2a}(\rho_1\varphi_2 - \rho_2\varphi_1)  \\
				&\quad +\dfrac{1}{a}(\varphi_1'\varphi_2 -\varphi_1\varphi_2')+\dfrac{\varepsilon_1 \varepsilon_2}{a}(\psi_1\psi_2 ' -\psi_1 '\psi_2) +\dfrac{\varepsilon_1}{a}G(g_2'g_1 - g_2 g_1'), \\
				 \psi_{12} 	 & = V_1 (\psi_2)-V_2(\psi_1)+\dfrac{1}{2a}(\rho_1\psi_2 -\rho_2\psi_1) \\
				 &\quad +\dfrac{1}{a}(\psi_1'\varphi_2 -\psi_1\varphi_2' +\psi_2\varphi_1' -\psi_2' \varphi_1) + G(l_1g_2 - l_2g_1),\\
				 \rho_{12}   &= V_1 (\rho_2) -V_2(\rho_1).	 
				\end{aligned} 
			\end{equation}
	From Lemma \ref{lema-variation2} we obtain
			\begin{equation}\label{J}
				\begin{aligned}
				\left[V_1 , V_2\right]_{\gamma}(k_{1}) & = \dfrac{1}{a^2}\varphi_{12}''' +\dfrac{1}{a}\left(k_{1}'\varphi_{12} +2k_{1}\varphi_{12}'\right)
								   -\dfrac{1}{a}\left( k_{2} '\psi_{12}+2 k_{2}\psi_{12}'\right) \\  
								   &\quad +\dfrac{1}{a}\left(\dfrac{1}{2a}\rho_{12}'' +k_{1}\rho_{12}-2Gg_{12}'\right). 
				\end{aligned}
			\end{equation}
	Expanding each earlier term by using $V(f)''' = V(f''')-\dfrac{3}{2a}\rho_V f''' -\dfrac{3}{2a}\rho_V ' f'' -\dfrac{1}{2a}\rho_V '' f'$ and other properties it follows that
	\begin{align}
			\dfrac{1}{a^2}\varphi_{12}''' & = V_1 \left(\dfrac{1}{a^2}\varphi_2 '''\right) - V_2 \left(\dfrac{1}{a^2}\varphi_1 '''\right) + \dfrac{1}{a^3}\left(\rho_2\varphi_1 ''' -\rho_1\varphi_2 '''\right) + \dfrac{1}{a^3}\left(\rho_1 ''\varphi_2 ' - \rho_2 ''\varphi_1 '\right) \nonumber \\
				&\quad +\dfrac{1}{2a^3}\left(\rho_1 '''\varphi_2 -\rho_2 '''\varphi_1\right) +\dfrac{1}{a^3}\left(\varphi_1^{(4)}\varphi_2 - \varphi_1\varphi_2^{(4)}+2\left(\varphi_1 '''\varphi_2 ' -\varphi_2 '''\varphi_1 '\right)\right) \label{K} \\
				&\quad+\dfrac{\varepsilon_1\varepsilon_2}{a^3}\left(\psi_1\psi_2 ^{(4)}-\psi_2\psi_1^{(4)}+2\left(\psi_1 '\psi_2 ''' - \psi_2 '\psi_1 '''\right)\right)+\dfrac{\varepsilon_1}{a^3}G(g_2'g_1 - g_2 g_1')'''; \nonumber 
			\end{align}
			\begin{align}	
				\dfrac{1}{a}\left(k_{1}'\varphi_{12}+2k_{1}\varphi_{12}'\right) & = V_1 \left(\dfrac{1}{a}\left(k_{1}'\varphi_2 +2 k_{1}\varphi_2 '\right)\right) - V_2 \left(\dfrac{1}{a}\left(k_{1}'\varphi_1 +2 k_{1}\varphi_1 '\right)\right) \nonumber  \\
				&\quad + \dfrac{1}{a}\left(V_2(2 k_{1})\varphi_1 ' +V_2 (k_{1}')\varphi_1 -V_1 (2 k_{1})\varphi_2 ' - V_1 (k_{1}')\varphi_2 \right) \nonumber \\
				&\quad  +\dfrac{k_{1}}{a^2}\left(\rho_1 '\varphi_2 -\rho_2 '\varphi_1\right)+\dfrac{2 k_{1}}{a^2}\left(\varphi_2\varphi_1 '' -\varphi_1\varphi_2 ''\right)+\dfrac{2\varepsilon_1\varepsilon_2 k_{1}}{a^2}\left(\psi_1\psi_2 '' -\psi_2\psi_1 ''\right) \label{L}  \\
				&\quad + \dfrac{k_{1}'}{2a^2}\left(\rho_1\varphi_2 -\rho_2\varphi_1\right) + \dfrac{k_{1}'}{a^2}\left(\varphi_2\varphi_1 ' -\varphi_1\varphi_2 '\right)+\dfrac{\varepsilon_1\varepsilon_2 k_{1}'}{a^2}\left(\psi_1\psi_2 ' -\psi_2\psi_1 '\right)  \nonumber \\
				&\quad+\dfrac{2\epsilon_1 k_{1}}{a^2}G(g_2'g_1 - g_2 g_1')' + \dfrac{\varepsilon_1 k_{1}'}{a^2}G(g_2'g_1 - g_2 g_1'); \nonumber 
			\end{align}
			\begin{align}
				-\dfrac{1}{a}\left( k_{2} '\psi_{12}+2 k_{2}\psi_{12}'\right) & = V_1 \left(-\dfrac{1}{a}\left( k_{2} '\psi_2 +2 k_{2}\psi_2 '\right)\right) - V_2 \left(-\dfrac{1}{a}\left( k_{2}'\psi_1 +2 k_{2}\psi_1 '\right)\right) \nonumber \\
				&\quad  - \dfrac{1}{a}\left(V_2(2 k_{2})\psi_1 ' +V_2 ( k_{2} ')\psi_1 -V_1 (2 k_{2})\psi_2 ' - V_1 ( k_{2}')\psi_2 \right) \nonumber \\
				&\quad -\dfrac{ k_{2}}{a^2}\left(\rho_1 '\psi_2 -\rho_2 '\psi_1\right)-\frac{k_{2}'}{2a^{2}}(\rho_{1}\psi_{2}-\rho_{2}\psi_{1})  \label{M} \\
				&\quad  -\dfrac{2 k_{2}}{a^2}\left(\varphi_2\psi_1 '' -\psi_1\varphi_2 '' +\psi_2\varphi_1 '' -\psi_2 ''\varphi_1\right) \nonumber  \\
				&\quad -\dfrac{ k_{2} '}{a^2}\left(\psi_1 '\varphi_2  -\psi_1\varphi_2 ' +\psi_2\varphi_1 ' -\psi_2 '\varphi_1\right)  \nonumber \\
				&\quad  - \dfrac{ k_{2} '}{a}G(l_1g_2 - l_2g_1) - \dfrac{2 k_{2}}{a}G(l_1g_2 - l_2g_1)'. \nonumber 
	\end{align}
	Based on the expressions $V(f)'' =V(f'')-\dfrac{1}{a}\rho_V f'' -\dfrac{1}{2a}\rho_V 'f'$ and $V(f)' =V(f')-\dfrac{1}{2a}\rho_V f'$ it is also easy to establish that
		\begin{align}
			\dfrac{1}{2a^2}\rho_{12}'' & = V_1 \left(\dfrac{1}{2a^2}\rho_2 ''\right) - V_2 \left(\dfrac{1}{2a^2}\rho_1 ''\right) +\dfrac{1}{2a^3}\left(\rho_2\rho_1 '' -\rho_1\rho_2 ''\right), \label{N} 
			\\
			\dfrac{k_{1}}{a}\rho_{12} & = V_1 \left(\dfrac{k_{1}}{a}\rho_2 \right) - V_2 \left(\dfrac{k_{1}}{a}\rho_1 \right) +V_2\left(\dfrac{k_{1}}{a}\right)\rho_1-V_1\left(\dfrac{k_{1}}{a}\right)\rho_2,  \label{O}
			\\
			-\dfrac{2G}{a}g_{12}' & = V_1 \left(-\dfrac{2G}{a}g_2 ' \right) - V_2 \left(-\dfrac{2G}{a}g_1 ' \right) +\dfrac{G}{a^2}\left(\rho_1 g_2' -\rho_2 g_1 '\right) \nonumber \\
					   & +\dfrac{2G}{a^2}\left(\varphi_1 'g_2 ' -\varphi_2 ' g_1 ' +\varphi_1 g_2 '' -\varphi_2 g_1 ''\right) -\dfrac{2G}{a}\left(\alpha_1 'g_2 + \alpha_1 g_2 ' - \alpha_2 'g_1 - \alpha_2 g_1 ' \right) \label{P}\\
					   & -\dfrac{2\varepsilon_2 G}{a}\left(\psi_1 'l_2 +\psi_1 l_2 ' -l_1 '\psi_2 - l_1\psi_2 '\right). \nonumber
		\end{align}
	When adding up \eqref{J}, \eqref{K}, \eqref{L}, \eqref{M}, \eqref{N}, \eqref{O} and \eqref{P}, and making a long but easy computation we obtain
	$$ [V_1 ,V_2]_{\gamma}(k_{1})=V_1 V_2 (k_{1}) -V_2 V_1 (k_{1}).$$
	Using again Lemma \ref{lema-variation2} we have
			\begin{equation}
				\begin{aligned}\label{Q}
				\left[V_1 , V_2\right]_{\gamma}( k_{2}) & = \dfrac{\varepsilon_1\varepsilon_2}{a^2}\psi_{12}''' +\dfrac{\varepsilon_1\varepsilon_2}{a}\left(k_{1}'\psi_{12} +2 k_{1}\psi_{12}'\right)
								   +\dfrac{1}{a}\left( k_{2} '\varphi_{12}+2 k_{2}\varphi_{12}'\right) \\
								   &\quad  +\dfrac{1}{a}\left( k_{2}\rho_{12}-a\varepsilon_2 Gl_{12}\right). 
				\end{aligned}
			\end{equation}
	In the same way as \eqref{H}, we can compute the terms of \eqref{Q},
			\begin{align}
				\dfrac{\varepsilon_1\varepsilon_2}{a^2}\psi_{12}''' & = V_1 \left(\dfrac{\varepsilon_1\varepsilon_2}{a^2}\psi_2 '''\right) - V_2 \left(\dfrac{\varepsilon_1\varepsilon_2}{a^2}\psi_1 '''\right) \nonumber \\
				&\quad  + \dfrac{\varepsilon_1\varepsilon_2}{a^3}\left(\rho_1 ''\psi_2 ' - \rho_2 ''\psi_1 '\right) +\dfrac{\varepsilon_1\varepsilon_2}{2a^3}\left(\rho_1 '''\psi_2 -\rho_2 '''\psi_1\right) \label{R} \\
				&\quad +\dfrac{\varepsilon_1\varepsilon_2}{a^3}\left[\psi_1^{(4)}\varphi_2 -\psi_2^{(4)}\varphi_1 +2\left(\psi_1 '''\varphi_2 ' -\psi_2 '''\varphi_1 '\right)+2\left(\psi_2 '\varphi_1 ''' -\psi_1 '\varphi_2 '''\right)\right. \nonumber \\
				&\quad\left. +\left(\psi_2\varphi_1^{(4)}-\psi_1\varphi_2^{(4)}\right)\right]+\dfrac{\varepsilon_1\varepsilon_2}{a^2}G(l_1g_2 - l_2g_1)'''; 
				\nonumber 
			\end{align}
			\begin{align}	
				\dfrac{\varepsilon_1\varepsilon_2}{a}\left(k_{1}'\psi_{12}+2 k_{1} \psi_{12}'\right) & = V_1 \left(\dfrac{\varepsilon_1\varepsilon_2}{a}\left(k_{1}'\psi_2 +2 k_{1}\psi_2 '\right)\right) - V_2 \left(\dfrac{\varepsilon_1\varepsilon_2}{a}\left(k_{1}'\psi_1 +2 k_{1}\psi_1 '\right)\right) \nonumber \\
				&\quad + \dfrac{\varepsilon_1\varepsilon_2}{a}\left(V_2(2 k_{1})\psi_1 ' +V_2 (k_{1}')\psi_1 -V_1 (2 k_{1})\psi_2 ' - V_1 (k_{1}')\psi_2 \right) \nonumber \\
				&\quad +\dfrac{\varepsilon_1\varepsilon_2 k_{1}}{a^2}\left(\rho_1 '\psi_2 -\rho_2 '\psi_1\right)+\dfrac{\varepsilon_1\varepsilon_2 k_{1}'}{2a^2}\left(\rho_1\psi_2 -\rho_2\psi_1\right)  \label{S} \\
				&\quad + \frac{2\varepsilon_1\varepsilon_2 k_{1}}{a^2}\left(\varphi_2\psi_1 '' -\varphi_1\psi_2 '' +\psi_2\varphi_1 '' - \psi_1\varphi_2''\right)\nonumber \\
				&\quad + \dfrac{\varepsilon_1\varepsilon_2 k_{1}'}{a^2}\left(\varphi_2\psi_1 ' -\psi_1\varphi_2' +\psi_2\varphi_1 ' -\psi_2 '\varphi_1\right) \nonumber \\
				&\quad +\frac{\varepsilon_1\varepsilon_2 G}{a}\left[2 k_{1} (l_1g_2 - l_2g_1)' +k_{1}'(l_1g_2 - l_2g_1)\right];
				\nonumber \\
				\dfrac{1}{a}\left( k_{2} '\varphi_{12}+2 k_{2}\varphi_{12}'\right) & = V_1 \left(\dfrac{1}{a}\left( k_{2} '\varphi_2 +2 k_{2}\varphi_2 '\right)\right) - V_2 \left(\dfrac{1}{a}\left( k_{2}'\varphi_1 +2 k_{2}\varphi_1 '\right)\right) \nonumber \\
				&\quad +\dfrac{1}{a}\left(V_2(2 k_{2})\varphi_1 ' +V_2 ( k_{2} ')\varphi_1 -V_1 (2 k_{2})\varphi_2 ' - V_1 ( k_{2}')\varphi_2 \right)  \nonumber \\
				&\quad +\dfrac{ k_{2}}{a^2}\left(\rho_1 '\varphi_2 -\rho_2 '\varphi_1\right) +\dfrac{ k_{2} '}{2a^2}\left(\rho_1\varphi_2 -\rho_2\varphi_1\right) \nonumber  \\ 
				&\quad +\dfrac{2 k_{2}}{a^2}\left(\varphi_2\varphi_1 '' -\varphi_1\varphi_2 ''\right) +\dfrac{2\varepsilon_1\varepsilon_2  k_{2}}{a^2}\left(\psi_1\psi_2 '' -\psi_1 ''\psi_2\right) \label{T} \\
				&\quad + \dfrac{ k_{2} '}{a^2}\left(\varphi_1 '\varphi_2 -\varphi_1\varphi_2 '\right) + \dfrac{\varepsilon_1\varepsilon_2  k_{2} '}{a^2}\left(\psi_1\psi_2 ' -\psi_1 '\psi_2\right) \nonumber \\
				&\quad +\dfrac{\varepsilon_1 G}{a^2}\left[2 k_{2} (g_2'g_1 - g_2 g_1')'+ k_{2}'(g_2'g_1 - g_2 g_1')\right]; 
				\nonumber 
			\end{align}
			\begin{align}	
				\dfrac{ k_{2}}{a}\rho_{12} & = V_1 \left(\dfrac{ k_{2}}{a}\rho_2 \right) - V_2 \left(\dfrac{ k_{2}}{a}\rho_1 \right) +V_2\left(\dfrac{ k_{2}}{a}\right)\rho_1-V_1\left(\dfrac{k_{1}}{a}\right)\rho_2; 
				\label{U} 
				\\
				-\varepsilon_2 Gl_{12} & = V_1 \left(-\varepsilon_2 Gl_2 \right) - V_2 \left(-\varepsilon_2 Gl_1 \right) -\varepsilon_2 G\left(f_2\psi_1 -f_1\psi_2\right) \nonumber \\
				&\quad +\dfrac{\varepsilon_1\varepsilon_2}{a^2}\left(g_2 '\psi_1 ' -g_1 '\psi_2 '\right)-\dfrac{\varepsilon_1\varepsilon_2 G}{a}\left(g_2\delta_1 -g_1\delta_2\right). \label{V} 
			\end{align}
	After some work, it also follows from \eqref{R}, \eqref{S}, \eqref{T}, \eqref{U}  and \eqref{V} that 
	$$ [V_1 ,V_2]_{\gamma}( k_{2})=V_1 V_2 ( k_{2}) -V_2 V_1 ( k_{2}).$$
	 The paragraph \textit{(c)} is a direct consequence of the definition, and  \textit{(f)} is also trivial taking into account the expression for $\rho_{12}$ given in  \eqref{H}.
	 The paragraph \textit{(d)} follows from a straightforward computation. Finally, to prove \textit{(e)}, let us denote $R(V_{1},V_{2})U=D_{[V_1,V_2]}U-D_{V_1}D_{V_2}U+D_{V_2}D_{V_1}U$. By using \textit{(d)} we obtain
	 $$[[V_1,V_2],V_3]+[[V_2,V_3],V_1]+[[V_3,V_1],V_2]=R(V_1,V_2)V_3+R(V_2,V_3)V_1+R(V_3,V_1)V_2=0.$$
\end{proof}
In particular, Proposition \ref{prop-lie-bracket} entails that the set of $\mathcal{P}$-local pseudo arc-length preserving variation vector fields  $T_{\mathcal{P},\gamma}(\Lambda)$ is a Lie subalgebra of the Lie algebra of the $\mathcal{P}$-local vector fields $(\mathfrak{X}^*_{\mathcal{P}}(\gamma),[,]_\gamma)$.

Before turning to study the geometric hierarchies of null curve flows in section \ref{hamiltonian-structure}, we point out that equations \eqref{eq-symplectic} and \eqref{eq-cosymplectic} are particularly noteworthy when vector fields locally preserve the pseudo arc-length parameter and the curvature $G$ vanishes. Equations \eqref{eq-symplectic} and \eqref{eq-cosymplectic} may be rewritten as
\begin{equation}\label{eq-operators}
\begin{aligned}
	(\varphi_V,\psi_V)&=A(h_V,l_V)=\left(A_1(h_V,l_V),A_2(h_V,l_V)\right), \\
	(V(k_{1}),V( k_{2}))&=B(\varphi_V,\psi_V)=\left(B_1(\varphi_V,\psi_V),B_2(\varphi_V,\psi_V)\right),
\end{aligned}
\end{equation}	
where 
\begin{equation}
A=\begin{pmatrix}A_1 \\ A_2 \end{pmatrix}=\begin{pmatrix}  \frac{a}{2}\omega(k_{1})  & -\frac{a}{2}D_{\sigma}^{-1} k_{2} \\ -\varepsilon_1 \varepsilon_2 a k_{2} D_{\sigma}^{-1}  & D_{\sigma} \\ \end{pmatrix}; \qquad
B=\begin{pmatrix}B_1 \\ B_2 \end{pmatrix}=\frac{1}{a}\begin{pmatrix}\theta(k_{1}) & -S( k_{2}) \\	S( k_{2}) & \varepsilon_1 \varepsilon_2 \theta(k_{1}) \end{pmatrix},
\end{equation}
with $\omega(k_{1})=\frac{1}{a}D_{\sigma}+k_{1}D_{\sigma}^{-1}+D_{\sigma}^{-1} k_{1}$, $\theta(k_{1})=\frac{1}{a}D_\sigma^3+k_{1}D_\sigma+D_{\sigma}k_{1}$ and $S( k_{2})=D_{\sigma} k_{2} +  k_{2} D_{\sigma}$.
It should be remarked that $A$ and $B$ come very close to being the symplectic and cosymplectic operators, respectively, for (up to scaling) the Hirota-Satsuma system (see \cite{fuchssteiner_lie_1982,oevel_integrability_1983}). Equation \eqref{eq-operators} can also be regarded as
\begin{equation}\label{split1}
\begin{aligned}
	(\varphi_V,-\psi_V)&=J(2h_V,-\varepsilon_{1}\varepsilon_{2}l_V)=\left(J_1(2h_V,-\varepsilon_{1}\varepsilon_{2}l_V),J_2(2h_V,-\varepsilon_{1}\varepsilon_{2}l_V)\right), \\
	(V(k_{1}),V( k_{2}))&=\Theta(\varphi_V,-\psi_V)=\left(\Theta_1(\varphi_V,-\psi_V),\Theta_2(\varphi_V,-\psi_V)\right),
\end{aligned}
\end{equation}
where 
\begin{equation}\label{symplectic-cosymplectic}
J=\begin{pmatrix}J_1 \\ J_2 \end{pmatrix}=-\frac{\varepsilon_{1}\varepsilon_{2}}{2}\begin{pmatrix}  -\frac{\varepsilon_{1}\varepsilon_{2} a}{2}\omega(k_{1})  & -a D_{\sigma}^{-1} k_{2} \\ -a k_{2} D_{\sigma}^{-1}  & -2 D_{\sigma} \\ \end{pmatrix}; \qquad
\Theta=\begin{pmatrix}\Theta_1 \\ \Theta_2 \end{pmatrix}=\frac{1}{a}\begin{pmatrix}\theta(k_{1}) & S( k_{2}) \\	S( k_{2}) & -\varepsilon_1 \varepsilon_2 \theta(k_{1}) \end{pmatrix},
\end{equation}

It is now therefore evident that $J$ and $\Theta$ are the symplectic and cosymplectic operators respectively for a rescaling of the HS-cKdV system. They have been obtained in a natural way using projections onto the screen bundle of both, the variation vector field $V$ and its covariant derivative $\nabla_T V$. This allows automatically to determine the recursion operator  $R=\Theta\circ J$, and the crucial relation
\begin{equation}\label{eq-recursion}
 (V(k_{1}),V( k_{2}))=R(2h_{V},-\varepsilon_{1}\varepsilon_{2}l_{V}).
\end{equation}

Somewhat analogous relationships were obtained between curve evolution in $3$-dimensional Riemannian manifolds in \cite{	  beffa_integrable_2002} (or more generally in $n$-dimensional Riemannian manifold with constant curvature in \cite{sanders_integrable_2003}) and the mKdV system. The above connection together with the availability of the Lie bracket provided by Proposition \ref{prop-lie-bracket} will be employed to study the integrability of null curve evolution in the next section. 

\section{Geometric hierarchies of null curve flows}\label{hamiltonian-structure}
The background given in \cite{del_amor_hamiltonian_2014} for the $3$-dimensional case used to construct a commuting hierarchy for null curve evolutions can also be well adapted to the $4$-dimensional case. Consider $\Lambda$ the space of pseudo arc-length parametrized null curves in the pseudo-Euclidean space $\mathbb{R}^{4}_{q}$. A map $\f:\Lambda \rightarrow \mathcal{C}^\infty(I,\mathbb{R})$ is referred to a scalar field on $\Lambda$ and $\f(\gamma)$ will be also denoted by $\f_\gamma$.
Let $\mathcal{A}$ be the algebra of $\mathcal{P}$-valued scalar fields on $\Lambda$, i.e., if $\f\in \mathcal{A}$, then $\f_\gamma\in \mathcal{P}$ for all $\gamma\in \Lambda$. In this sense, we will also understand the curvatures scalar fields $\k_{\bm{1}},\k_{\bm{2}}:\Lambda \rightarrow \mathcal{C}^\infty(I,\mathbb{R})$  with its obvious meaning.

Similarly, a map $\V:\Lambda \rightarrow \cup_{\gamma\in \Lambda} T_\gamma\Lambda$ is referred to as a vector field on $\Lambda$, and $\V(\gamma)$ will be also denoted by $\V_\gamma$. We shall denote the set of tangent vector fields on $\Lambda$ as $\mathfrak{X}(\Lambda)$, and within we consider the subset $\mathfrak{X}_{\mathcal{A}}(\Lambda)$ of vector fields $\V$ such that $\V_{\gamma}\in \mathfrak{X}_{\mathcal{P}}(\gamma)$, namely, if we denote $\V=\f\T+\h\W_{\bm{1}}+\g \N+\l \W_{\bm{2}}$, then
\begin{equation}\label{vector-fields}
	\mathfrak{X}_{\mathcal{A}}(\Lambda)=\left\{\V\in \mathfrak{X}(\Lambda): \f,\h,\g,\l\in \mathcal{A};\, \h=-\frac{\varepsilon_1}{a}\g';\, \f=\frac{1}{2a }
	\left[\frac{\varepsilon_1}{a}\g''+\D^{-1}_{\sigma}(\varepsilon_1\k_{\bm{1}}'\g-a\k_{\bm{2}}\l)\right]\right\},
\end{equation}
where the derivative and anti-derivative operators act on scalar fields as $\f'(\gamma)=\f'_\gamma$ and $\D^{-1}_{\sigma}(\f)(\gamma)=D^{-1}_{\sigma}(\f_\gamma)$ respectively. Thus $\mathfrak{X}_{\mathcal{A}}(\Lambda)$ stands for the set of $\mathcal{A}$-local vector fields locally preserving the pseudo arc-length parameter and the causal character. These vector fields commute with the tangent vector field $\T$, so they will be called evolution vector fields. We also denote by $\bar{\mathfrak{X}}_{\mathcal{A}}(\Lambda)$ and $\mathfrak{X}^*_{\mathcal{A}}(\Lambda)$ the sets of vector fields $\V$ such that $\V_\gamma\in \mathfrak{X}_{\mathcal{P}}(\gamma)$ and $\V_\gamma\in \mathfrak{X}^*_{\mathcal{P}}(\gamma)$, respectively. Hence,
\begin{equation*}
\begin{aligned}
	\bar{\mathfrak{X}}_{\mathcal{A}}(\Lambda)&=\left\{\V=\f\T+\h\W_{\bm{1}}+\g \N+\l \W_{\bm{2}}: \f,\h,\g,\l\in \mathcal{A}\right\}. \\
	\mathfrak{X}^*_{\mathcal{A}}(\Lambda)&=\left\{\V\in \bar{\mathfrak{X}}_{\mathcal{A}}(\Lambda): \h=-\frac{\varepsilon_1}{a}\g'\right\}.
\end{aligned}
\end{equation*}
\begin{remark}
	In what follows, we shall operate with scalar fields and vector fields in the natural way, understanding that the result of the operation is again a scalar field or vector field. For instance, if $\V,\U$ are vector fields on $\Lambda$, then $\left\langle \V,\U \right\rangle$ is a scalar field, where $\left\langle \V,\U \right\rangle(\gamma)=\left\langle \V_\gamma,\U_\gamma \right\rangle$; or $\nabla_{\T} \V$ is again a vector field, where $\nabla_{\T} \V(\gamma)=\nabla_{\T_\gamma}\V_\gamma$ and so on.
\end{remark}
Hence, for $\V\in \mathfrak{X}^*_{\mathcal{A}}(\Lambda)$, the operator 
$D_{\V}:\bar{\mathfrak{X}}_{\mathcal{A}}(\Lambda) \rightarrow \bar{\mathfrak{X}}_{\mathcal{A}}(\Lambda)$ 
is defined as $(D_{\V}\U)(\gamma)=D_{\V_\gamma}\U_\gamma$. The operator $D_{\V}$ can also be described in other words when $V\in \mathfrak{X}_{\mathcal{A}}(\Lambda)$. Consider $\gamma$ be a null curve in $\Lambda$,  and suppose that $\V_{\gamma}(\sigma)=\frac{\partial \gamma}{\partial t}(\sigma,0)$, then  
$$\ds (D_{\V}\U)_\gamma(\sigma)=\left.\frac{D}{\partial t}\right|_{t = 0}\U_{\gamma_t}(\sigma).$$
In fact, the tensor derivation $D_{\V}$ is an extension of the Fréchet derivative defined in \eqref{frechet-derivative} for derivations to vector fields on the null curves space. In this way, this operator can be easily translated to the context of any other type of curves.

The Lie algebra structure on local vector fields locally preserving the causal character provided by Proposition \ref{prop-lie-bracket} (along a particular curve) can also be easily extended on the set $\mathfrak{X}^*_{\mathcal{A}}(\Lambda)$.

\begin{proposition}\label{prop-LieBracket2}
	The map  $[\cdot,\cdot]:\mathfrak{X}^*_{\mathcal{A}}(\Lambda)\times \mathfrak{X}^*_{\mathcal{A}}(\Lambda)\rightarrow \mathfrak{X}^*_{\mathcal{A}}(\Lambda)$  given by
	$$[\V,\U](\gamma)=[\V_{\gamma},\U_{\gamma}]_\gamma$$  is a Lie bracket verifying the following
	\begin{enumerate}[(a)]
		\item $[\V,\U](\f)=\V\U(\f)-\U\V(\f)$ for all $\f\in \mathcal{A}$.
		\item $[,]$ is closed for elements in $\mathfrak{X}_{\mathcal{A}}(\Lambda)$, i.e., if $\V,\U \in \mathfrak{X}_{\mathcal{A}}(\Lambda)$, then $[\V,\U]\in \mathfrak{X}_{\mathcal{A}}(\Lambda)$.
	\end{enumerate}
	Hence, $[,]$ is a Lie bracket, ($\mathfrak{X}^*_{\mathcal{A}}(\Lambda),[,])$ is a Lie algebra and the space of evolution vector fields $(\mathfrak{X}_{\mathcal{A}}(\Lambda),[,])$ is a Lie subalgebra of $\mathfrak{X}^*_{\mathcal{A}}(\Lambda)$.
\end{proposition}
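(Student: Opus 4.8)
The plan is to derive every assertion by restricting, curve by curve, to the situation already settled in Proposition~\ref{prop-lie-bracket}. The first step is to regard $\mathfrak{X}^{*}_{\mathcal{A}}(\Lambda)$ as the subspace of the product $\prod_{\gamma\in\Lambda}\mathfrak{X}^{*}_{\mathcal{P}}(\gamma)$ consisting of those families $\gamma\mapsto V_{\gamma}$ whose four frame components are $\mathcal{A}$-valued (polynomial in $k_{1},k_{2}$ and their $\sigma$-derivatives, uniformly in $\gamma$), and likewise for $\mathfrak{X}_{\mathcal{A}}(\Lambda)$. Since the bracket is defined factorwise, $[\V,\U](\gamma)=[\V_{\gamma},\U_{\gamma}]_{\gamma}$, and since Proposition~\ref{prop-lie-bracket} already provides, for each $\gamma$, $\mathbb{R}$-bilinearity, skew-symmetry and the Jacobi identity for $[\cdot,\cdot]_{\gamma}$, those three properties transfer verbatim to $[\cdot,\cdot]$. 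Hence the only real work is (i) to show the factorwise bracket stays inside $\mathfrak{X}^{*}_{\mathcal{A}}(\Lambda)$, and then to establish the two explicit identities (a) and (b).

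For (i), I would fix $\V,\U\in\mathfrak{X}^{*}_{\mathcal{A}}(\Lambda)$; then $\V_{\gamma},\U_{\gamma}\in\mathfrak{X}^{*}_{\mathcal{P}}(\gamma)$ for every $\gamma$, so $D_{\V_{\gamma}},D_{\U_{\gamma}}$ are defined and Proposition~\ref{prop-lie-bracket}(a) already gives $[\V_{\gamma},\U_{\gamma}]_{\gamma}\in\mathfrak{X}^{*}_{\mathcal{P}}(\gamma)$. What remains is the $\mathcal{A}$-valuedness of its components, which can be read off the formulas obtained inside the proof of Proposition~\ref{prop-lie-bracket}: the components $g_{12},h_{12}$ (hence, through \eqref{F}, the functions $\varphi_{12},\psi_{12},\rho_{12}$ of \eqref{H}, and hence $f_{12},l_{12}$) are assembled from the components of $\V_{\gamma},\U_{\gamma}$ --- $\mathcal{A}$-valued by hypothesis --- using only multiplication by polynomials in $k_{1},k_{2}$ and their derivatives, the operators $D_{\sigma}$ and $D_{\sigma}^{-1}$, and the derivations $\V_{\gamma}(\cdot),\U_{\gamma}(\cdot)$; as $\V_{\gamma}(k_{1}),\V_{\gamma}(k_{2})$ are themselves the polynomial differential expressions of Lemma~\ref{lema-variation2}(e),(f), these derivations carry $\mathcal{A}$-valued scalar fields to $\mathcal{A}$-valued scalar fields, and each of the remaining operations preserves $\mathcal{A}$-valuedness. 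This would give $[\V,\U]\in\mathfrak{X}^{*}_{\mathcal{A}}(\Lambda)$. The same inspection shows $\V_{\gamma}\mapsto D_{\V_{\gamma}}$ is $\mathbb{R}$-linear, since $\alpha_{V},\beta_{V},\delta_{V},\varphi_{V},\psi_{V},\rho_{V},g_{V},l_{V}$ and the derivation $\V(\cdot)$ all depend $\mathbb{R}$-linearly on $\V$, which is the one-line justification behind the $\mathbb{R}$-bilinearity invoked above.

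For (a), I would read $\V(\f)$ as the scalar field $\gamma\mapsto\V_{\gamma}(\f_{\gamma})$, so that $(\V\U(\f))(\gamma)=\V_{\gamma}(\U_{\gamma}(\f_{\gamma}))$; evaluating at $\gamma$ and applying Proposition~\ref{prop-lie-bracket}(b) yields $[\V,\U](\f)(\gamma)=[\V_{\gamma},\U_{\gamma}]_{\gamma}(\f_{\gamma})=\V_{\gamma}\U_{\gamma}(\f_{\gamma})-\U_{\gamma}\V_{\gamma}(\f_{\gamma})=(\V\U(\f)-\U\V(\f))(\gamma)$. For (b), I would first note that the two relations defining $\mathfrak{X}_{\mathcal{A}}(\Lambda)$ in \eqref{vector-fields} are precisely the relations cutting out $T_{\mathcal{P},\gamma}(\Lambda)$ inside $\mathfrak{X}^{*}_{\mathcal{P}}(\gamma)$, rewritten through $h_{V}=-\tfrac{\varepsilon_{1}}{a}g_{V}'$ --- a single integration by parts on $D_{\sigma}^{-1}(k_{1}h_{V})$ converts the $f_{V}$-formula of $T_{\mathcal{P},\gamma}(\Lambda)$ into the one in \eqref{vector-fields} --- so that $\V\in\mathfrak{X}_{\mathcal{A}}(\Lambda)$ iff $\V_{\gamma}\in T_{\mathcal{P},\gamma}(\Lambda)$ for every $\gamma$; then Proposition~\ref{prop-lie-bracket}(f) applied at each $\gamma$ places $[\V_{\gamma},\U_{\gamma}]_{\gamma}$ in $T_{\mathcal{P},\gamma}(\Lambda)$, and together with the $\mathcal{A}$-valuedness from (i) this gives $[\V,\U]\in\mathfrak{X}_{\mathcal{A}}(\Lambda)$. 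Collecting everything yields that $(\mathfrak{X}^{*}_{\mathcal{A}}(\Lambda),[\cdot,\cdot])$ is a Lie algebra with $(\mathfrak{X}_{\mathcal{A}}(\Lambda),[\cdot,\cdot])$ as a Lie subalgebra.

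The step I expect to be the main obstacle is not any single computation but the uniform handling of the anti-derivative $D_{\sigma}^{-1}$ over the whole family $\{\gamma\}$: one must check that the component formulas for $[\V_{\gamma},\U_{\gamma}]_{\gamma}$ use $D_{\sigma}^{-1}$ in the same normalization --- hence fix the integration constants of Remark~\ref{constants-integration} the same way --- as the definitions of $\mathfrak{X}^{*}_{\mathcal{A}}(\Lambda)$ and $\mathfrak{X}_{\mathcal{A}}(\Lambda)$, so that $\gamma\mapsto[\V_{\gamma},\U_{\gamma}]_{\gamma}$ is literally one element of those spaces and not merely a $\gamma$-indexed family of valid choices. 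Past this bookkeeping there is no substantive difficulty, all the analytic content having been supplied by Proposition~\ref{prop-lie-bracket}.
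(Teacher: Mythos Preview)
Your proposal is correct and follows exactly the approach the paper intends: the paper gives no explicit proof of Proposition~\ref{prop-LieBracket2}, simply remarking that the Lie algebra structure of Proposition~\ref{prop-lie-bracket} ``can also be easily extended'' to $\mathfrak{X}^{*}_{\mathcal{A}}(\Lambda)$, and your curve-by-curve reduction to Proposition~\ref{prop-lie-bracket}(a),(b),(c),(e),(f) is precisely that extension spelled out in detail. Your attention to the $\mathcal{A}$-valuedness of the bracket components and to the uniform handling of $D_{\sigma}^{-1}$ goes beyond what the paper makes explicit, but these are the right points to flag.
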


Let us define $\mathop{\rm der}\nolimits (\mathcal{A})$ as the set of derivations on $\mathcal{A}$ defined in the natural way and  $\mathop{\rm der}\nolimits^*(\mathcal{A})$ the Lie subalgebra of all evolution derivations. In this setting, the elements of $\mathop{\rm der}\nolimits^*(\mathcal{A})$ are given by $\Partial_{(\p,\q)}$, with $\p,\q\in \mathcal{A}$, such that they are defined as usual by $\Partial_{(\p,\q)}\f(\gamma)=\Partial_{(\p_\gamma,\q_\gamma)}\f_\gamma$, for all $\f\in \mathcal{A}$ and $\gamma\in\Lambda$. Each vector field $\V$ on $\Lambda$ can be regarded as a derivation on $\mathcal{A}$, 
acting on the generators $\k_{\bm{1}}$ and $\k_{\bm{2}}$ in the following way: $$\V(\k_{\bm{1}})(\gamma)=\V_{\gamma}(\k_{\bm{1}\gamma}), \quad \V(\k_{\bm{2}})(\gamma)=\V_{\gamma}(\k_{\bm{2}\gamma}).$$

\begin{theorem}\label{main-theorem}
	The map $\Phi:\mathfrak{X}_{\mathcal{A}}(\Lambda) \rightarrow \mathop{\rm der}\nolimits^*(\mathcal{A})$ defined by $$\Phi(\V)=\Partial_{\left(\V(\k_{\bm{1}}),\V(\k_{\bm{2}})\right)}=\Partial_{\Theta(\bm{\varphi}_{\V},-\bm{\psi}_{\V})},$$ 
	where $\bm{\varphi}_{\V}=\varepsilon_{1}\left\langle \nabla_{\T} \V, \W_{\bm{1}} \right\rangle$, $\bm{\psi}_{\V}=\varepsilon_{2}\left\langle \nabla_{\T} \V, \W_{\bm{2}} \right\rangle$ and $\Theta$ is defined in \eqref{symplectic-cosymplectic}, is a one-to-one homomorphism of Lie algebras. In particular, $\V_{\bm{1}}$ and $\V_{\bm{2}}$ are commuting vector fields with respect to the Lie bracket defined by Proposition \ref{prop-LieBracket2} if and only if their corresponding curvature flows $\left(\V_{\bm{1}}(\k_{\bm{1}}),\V_{\bm{1}}(\k_{\bm{2}})\right)$ and $\left(\V_{\bm{2}}(\k_{\bm{1}}),\V_{\bm{2}}(\k_{\bm{2}})\right)$ commute with respect to the usual Lie bracket for scalar fields.
\end{theorem}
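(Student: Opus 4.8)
The plan is to establish that $\Phi$ is a well-defined Lie algebra homomorphism and then that it is injective. First I would verify that $\Phi$ lands in $\mathop{\rm der}\nolimits^*(\mathcal{A})$: for $\V\in\mathfrak{X}_{\mathcal{A}}(\Lambda)$ the vector field commutes with $\T$, so $\rho_{\V}=0$ in the formulas of Lemma \ref{lema-variation2}, which reduces \eqref{eq-cosymplectic} precisely to the pair $(\V(\k_{\bm 1}),\V(\k_{\bm 2}))=\Theta(\bm\varphi_{\V},-\bm\psi_{\V})$ after the sign bookkeeping of \eqref{split1}--\eqref{symplectic-cosymplectic} (also using $G=0$, since we are in $\mathbb{R}^4_q$). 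Since $\V(\k_{\bm 1}),\V(\k_{\bm 2})\in\mathcal{A}$, the derivation $\Partial_{(\V(\k_{\bm 1}),\V(\k_{\bm 2}))}$ is an evolution derivation, so $\Phi$ is well defined.

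Next comes the homomorphism property, which is the heart of the matter: I must show $\Phi([\V,\U])=[\Phi(\V),\Phi(\U)]$ in $\mathop{\rm der}\nolimits^*(\mathcal{A})$. Two evolution derivations $\Partial_{(\p_1,\q_1)}$ and $\Partial_{(\p_2,\q_2)}$ have commutator $\Partial_{[(\p_1,\q_1),(\p_2,\q_2)]}$ where the bracket is the standard one of Section 2, and unwinding definitions this amounts to checking, for $\gamma\in\Lambda$ and for $\f=\k_{\bm 1},\k_{\bm 2}$, that
\begin{equation*}
	[\V,\U]_\gamma(\f_\gamma)=\V_\gamma(\U_\gamma(\f_\gamma))-\U_\gamma(\V_\gamma(\f_\gamma)).
\end{equation*}
But this is exactly statement \textit{(b)} of Proposition \ref{prop-lie-bracket} (equivalently \textit{(a)} of Proposition \ref{prop-LieBracket2}), applied to the two generators of $\mathcal{P}$; since a derivation on $\mathcal{A}$ is determined by its action on $\k_{\bm 1}$ and $\k_{\bm 2}$, this identity on generators propagates to all of $\mathcal{A}$ and gives $\Phi([\V,\U])=[\Phi(\V),\Phi(\U)]$. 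So the homomorphism property is not a new computation at all — it is a repackaging of the already-proved Proposition \ref{prop-lie-bracket}(b), together with the observation that $[\V,\U]$ again preserves the pseudo arc-length parameter (Proposition \ref{prop-LieBracket2}(b)), so it lies in the domain of $\Phi$.

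For injectivity, suppose $\Phi(\V)=0$, i.e. $\V(\k_{\bm 1})=\V(\k_{\bm 2})=0$. By \eqref{eq-recursion} this says $R(2\h_{\V},-\varepsilon_1\varepsilon_2\l_{\V})=0$, but more directly one uses \eqref{eq-symplectic}/\eqref{eq-operators}: from the cosymplectic relation $(\V(\k_{\bm 1}),\V(\k_{\bm 2}))=\Theta(\bm\varphi_{\V},-\bm\psi_{\V})$ and the explicit form of $\Theta$ in \eqref{symplectic-cosymplectic}, vanishing forces $\bm\varphi_{\V}=\bm\psi_{\V}=0$ (the operator $\theta(\k_{\bm 1})=\frac1aD_\sigma^3+\k_{\bm 1}D_\sigma+D_\sigma\k_{\bm 1}$ is injective on $\mathcal{P}_0$, being a third-order differential operator with nonvanishing leading coefficient, and the $2\times2$ system is then triangularizable by taking sums and differences of rows). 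Then the symplectic relation \eqref{eq-symplectic}, with $\bm\varphi_{\V}=\bm\psi_{\V}=0$, reads $\frac a2\,\omega(\k_{\bm 1})\h_{\V}-\frac a2 D_\sigma^{-1}(\k_{\bm 2}\l_{\V})=0$ and $-\varepsilon_1\varepsilon_2 a\k_{\bm 2}D_\sigma^{-1}(\h_{\V})+D_\sigma\l_{\V}=0$; differentiating and again using injectivity of $D_\sigma$ on $\mathcal{P}_0$ yields $\h_{\V}=\l_{\V}=0$, whence $\g_{\V}=\f_{\V}=0$ by \eqref{vector-fields}, so $\V=0$. The ``in particular'' clause about $\V_{\bm 1},\V_{\bm 2}$ commuting is then immediate: $[\V_{\bm 1},\V_{\bm 2}]=0$ iff $\Phi([\V_{\bm 1},\V_{\bm 2}])=0$ (injectivity) iff $[\Phi(\V_{\bm 1}),\Phi(\V_{\bm 2})]=0$ (homomorphism), which is precisely commutation of the curvature flows for the standard bracket.

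\textbf{Main obstacle.} There is no genuinely hard step once Proposition \ref{prop-lie-bracket} is in hand; the only care needed is the bookkeeping, making sure that on $\mathfrak{X}_{\mathcal{A}}(\Lambda)$ the reduction $\rho_{\V}=0$ and $G=0$ really turns the general formula \eqref{eq-cosymplectic} into the clean $\Theta$ of \eqref{symplectic-cosymplectic}, and tracking the signs in passing from $(h_V,l_V)$ to $(2h_V,-\varepsilon_1\varepsilon_2 l_V)$ consistently between $A,B$ and $J,\Theta$. The injectivity argument rests on invertibility of the relevant differential operators on the space $\mathcal{P}_0$ of constant-term-free polynomials, which is where the anti-derivative $D_\sigma^{-1}$ behaves well; one should state this invertibility cleanly rather than re-derive it.
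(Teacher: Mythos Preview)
Your treatment of the homomorphism property is essentially identical to the paper's: both reduce the identity $\Phi([\V_{\bm 1},\V_{\bm 2}])=[\Phi(\V_{\bm 1}),\Phi(\V_{\bm 2})]$ to Proposition~\ref{prop-lie-bracket}(b) applied to the generators $\k_{\bm 1},\k_{\bm 2}$, together with closure of $\mathfrak X_{\mathcal A}(\Lambda)$ under the bracket.

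Your injectivity argument, however, has genuine gaps that the paper avoids by a different method.

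\emph{First gap.} The claim that the $2\times2$ operator matrix $\Theta$ ``triangularizes by taking sums and differences of rows'' is not justified. When $\varepsilon_1\varepsilon_2=-1$ one can indeed add and subtract the two equations to obtain $(\theta(\k_{\bm 1})\pm S(\k_{\bm 2}))(\bm\varphi_{\V}\mp\bm\psi_{\V})=0$, and each scalar operator has leading part $\tfrac1a D_\sigma^3$. But when $\varepsilon_1\varepsilon_2=+1$ the system reads $\theta(\k_{\bm 1})\bm\varphi_{\V}=S(\k_{\bm 2})\bm\psi_{\V}$ and $S(\k_{\bm 2})\bm\varphi_{\V}=-\theta(\k_{\bm 1})\bm\psi_{\V}$; no real linear combination decouples these, and since $\theta(\k_{\bm 1})$ and $S(\k_{\bm 2})$ do not commute you cannot simply compose to eliminate one unknown. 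The paper handles both signs uniformly by an \emph{order-of-derivative} argument: if $\mathop{\rm ord}(\bm\varphi_{\V})=n$ then the first equation forces $\mathop{\rm ord}(\bm\psi_{\V})=n+2$, and feeding this into the second equation forces $\mathop{\rm ord}(\bm\varphi_{\V})=n+4$, a contradiction; one then deals separately with the constant case.

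\emph{Second gap.} Even granting $\bm\varphi_{\V}=\bm\psi_{\V}=0$, your route through the symplectic relation \eqref{eq-symplectic} does not close. The passage ``differentiating and using injectivity of $D_\sigma$ on $\mathcal P_0$ yields $\h_{\V}=\l_{\V}=0$'' still leaves a coupled system containing $D_\sigma^{-1}$ (equivalently $\g_{\V}$), and more seriously, the implication $\h_{\V}=\l_{\V}=0\Rightarrow\g_{\V}=\f_{\V}=0$ via \eqref{vector-fields} is \emph{false}: by Remark~\ref{constants-integration} and Example~\ref{example-constants}, the pair $(\h_{\V},\l_{\V})=(0,0)$ still allows the two integration constants, giving the nonzero vector fields $\mathcal X(0,0)=(\tfrac{\varepsilon_1 c_1}{2a}\k_{\bm 1}+c_2)\T+c_1\N$. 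The paper instead observes (via \eqref{formula-DV}) that $\bm\varphi_{\V}=\bm\psi_{\V}=0$ is equivalent to $\nabla_{\T}\V=0$, writes this out componentwise as the four scalar equations \eqref{eqs-constant-field} in $\f_{\V},\h_{\V},\g_{\V},\l_{\V}$, and runs a second order-of-derivative argument on that system to force all four components to vanish simultaneously, which correctly kills the integration constants.

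In short: your ``no genuinely hard step'' assessment is right for the homomorphism, but the injectivity is where the content lies, and there you need an argument (like the paper's order bookkeeping) that does not rely on decoupling the operator matrix and that tracks $\f_{\V},\g_{\V}$ directly rather than inferring them from $\h_{\V},\l_{\V}$.
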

\begin{proof}
	 Since the map $\Phi$ is clearly linear, it is enough to prove that $\Phi$ keeps the Lie bracket, i.e., $\Phi([\V_{\bm{1}},\V_{\bm{2}}])=[\Phi(\V_{\bm{1}}),\Phi(\V_{\bm{2}})]$, the latter being equivalent to show that 
	 \begin{equation}\label{condicion-main-theorem}
		 \Partial_{\left([\V_{\bm{1}},\V_{\bm{2}}](\k_{\bm{1}}),[\V_{\bm{1}},\V_{\bm{2}}](\k_{\bm{2}})\right)}=\left[\Partial_{\left(\V_{\bm{1}}(\k_{\bm{1}}),\V_{\bm{1}}(\k_{\bm{2}})\right)},\Partial_{\left(\V_{\bm{2}}(\k_{\bm{1}}),\V_{\bm{2}}(\k_{\bm{2}})\right)}\right].
	 \end{equation} 
	 From the last equality of Proposition \ref{prop-lie-bracket}(b) we deduce
	 \begin{equation*}
		 \begin{aligned}
			\left[\V_{\bm{1}\gamma},\V_{\bm{2}\gamma}\right]_\gamma(\k_{\bm{1}\gamma})&=\V_{\bm{1}\gamma}\V_{\bm{2}\gamma}(\k_{\bm{1}\gamma})-\V_{\bm{2}\gamma}\V_{\bm{1}\gamma}(\k_{\bm{1}\gamma}) \\
			&=\Partial_{\left(\V_{\bm{1}\gamma}(\k_{\bm{1}\gamma}),\V_{\bm{1}\gamma}(\k_{\bm{2}\gamma})\right)}\V_{\bm{2}\gamma}(\k_{\bm{1}\gamma})-\Partial_{\left(\V_{\bm{2}\gamma}(\k_{\bm{1}\gamma}),\V_{\bm{2}\gamma}(\k_{\bm{2}\gamma})\right)}\V_{\bm{1}\gamma}(\k_{\bm{1}\gamma}),\\  
			\left[\V_{\bm{1}\gamma},\V_{\bm{2}\gamma}\right]_\gamma(\k_{\bm{2}\gamma})&=\V_{\bm{1}\gamma}\V_{\bm{2}\gamma}(\k_{\bm{2}\gamma})-\V_{\bm{2}\gamma}\V_{\bm{1}\gamma}(\k_{\bm{2}\gamma}) \\
			&=\Partial_{\left(\V_{\bm{1}\gamma}(\k_{\bm{1}\gamma}),\V_{\bm{1}\gamma}(\k_{\bm{2}\gamma})\right)}\V_{\bm{2}\gamma}(\k_{\bm{2}\gamma})-\Partial_{\left(\V_{\bm{2}\gamma}(\k_{\bm{1}\gamma}),\V_{\bm{2}\gamma}(\k_{\bm{2}\gamma})\right)}\V_{\bm{1}\gamma}(\k_{\bm{2}\gamma}),
		 \end{aligned}
	 \end{equation*}
	 and it is therefore satisfied
	 \begin{equation*}
		\begin{aligned}
			\left([\V_{\bm{1}\gamma},\V_{\bm{2}\gamma}]_\gamma(\k_{\bm{1}\gamma}),[\V_{\bm{1}\gamma},\V_{\bm{2}\gamma}]_\gamma(\k_{\bm{2}\gamma})\right)
			&=\Partial_{\left(\V_{\bm{1}\gamma}(\k_{\bm{1}\gamma}),\V_{\bm{1}\gamma}(\k_{\bm{2}\gamma})\right)}\left(\V_{\bm{2}\gamma}(\k_{\bm{1}\gamma}),\V_{\bm{2}\gamma}(\k_{\bm{2}\gamma})\right) \\
			&\quad -\Partial_{\left(\V_{\bm{2}\gamma}(\k_{\bm{1}\gamma}),\V_{\bm{2}\gamma}(\k_{\bm{2}\gamma})\right)}\left(\V_{\bm{1}\gamma}(\k_{\bm{1}\gamma}),\V_{\bm{1}\gamma}(\k_{\bm{2}\gamma})\right) \\
			&=\left[\left(\V_{\bm{1}\gamma}(\k_{\bm{1}\gamma}),\V_{\bm{1}\gamma}(\k_{\bm{2}\gamma})\right),\left(\V_{\bm{2}\gamma}(\k_{\bm{1}\gamma}),\V_{\bm{2}\gamma}(\k_{\bm{2}\gamma})\right)\right]
		\end{aligned}
	 \end{equation*}
	 for all $\gamma\in\Lambda$. Finally, formula \eqref{condicion-main-theorem} is followed from:
	 \begin{equation*}
		\left[\Partial_{\left(\V_{\bm{1}\gamma}(\k_{\bm{1}\gamma}),\V_{\bm{1}\gamma}(\k_{\bm{2}\gamma})\right)},\Partial_{\left(\V_{\bm{2}\gamma}(\k_{\bm{1}\gamma}),\V_{\bm{2}\gamma}(\k_{\bm{2}\gamma})\right)}\right]=\Partial_{\left[\left(\V_{\bm{1}\gamma}(\k_{\bm{1}\gamma}),\V_{\bm{1}\gamma}(\k_{\bm{2}\gamma})\right),\left(\V_{\bm{2}\gamma}(\k_{\bm{1}\gamma}),\V_{\bm{2}\gamma}(\k_{\bm{2}\gamma})\right)\right]}. 
	 \end{equation*}
	 
	In order to prove the injectivity we shall prove that $\Partial_{(\V(\k_{\bm{1}}),\V(\k_{\bm{2}}))}=0$ implies $\V=0$, which is equivalent to proving that $\V(\k_{\bm{1}})=\V(\k_{\bm{2}})=0$ implies $\V=0$. As a first step we will prove that if $\V(\k_{\bm{1}})=\V(\k_{\bm{2}})=0$ then  $\bm{\varphi}_{\V}=\bm{\psi}_{\V}=0$ and so,  from formula \eqref{formula-DV},  $\nabla_{\T}\V=0$. According to \eqref{split1} we have that	$$(\V(\k_{\bm{1}}),\V(\k_{\bm{2}}))=\Theta(\bm{\varphi}_{\V},-\bm{\psi}_{\V})=\frac{1}{a}\left(\theta(\k_{\bm{1}})\bm{\varphi}_{\V}-S(\k_{\bm{2}})\bm{\psi}_{\V},S(\k_{\bm{2}})\bm{\varphi}_{\V}+\varepsilon_{1}\varepsilon_{2}\theta(\k_{\bm{1}})\bm{\psi}_{\V}\right).$$
	For a scalar field $\f\in \mathcal{A}$ we denote by $\rk(\f)$ the order of the highest derivative (with respect to both $\k_{\bm{1}}$ or $\k_{\bm{2}}$) appearing in $\f$, i.e.,
	$$\rk(\f)=\mathop{\rm max}\nolimits \left\{i:\frac{\Partial\f}{\Partial\k_{\bm{1}}^{(i)}}\neq 0\quad\text{or}\quad\frac{\Partial\f}{\Partial\k_{\bm{2}}^{(i)}}\neq 0\right\}.$$ 
	Suppose that $\rk(\bm{\varphi}_{\V})=n\neq 0$, then we have that $\rk(\theta(\k_{\bm{1}})\bm{\varphi}_{\V})=n+3$. Since $\V(\k_{\bm{1}})=0$ it is necessarily obtained that $\rk(S(\k_{\bm{2}})\bm{\psi}_{\V})=n+3$, whence $\rk(\bm{\psi}_{\V})=n+2$. Accordingly, $\rk(\theta(\k_{\bm{1}})\bm{\psi}_{\V})=n+5$, which together with the equation $\V(\k_{\bm{2}})=0$ would lead to $\rk(\bm{\varphi}_{\V})=n+4$ and so a contradiction. Therefore the scalar field $\bm{\varphi}_{\V}$ is constant, $\bm{\varphi}_{\V}=c$, and it would verify the equation
	$$0=\frac{1}{a}\left(\theta(\k_{\bm{1}})\bm{\varphi}_{\V}-S(\k_{\bm{2}})\bm{\psi}\right)=\frac{1}{a}\left(c\k_{\bm{1}}'-S(\k_{\bm{2}})\bm{\psi}_{\V}\right).$$
	The latter equation is satisfied if and only if $c=0$ and $\bm{\psi}_{\V}=0$, i.e., $\nabla_{\T}\V=0$. Using the formula \eqref{eq1-derivadas}, the equation $\nabla_{\T}\V=0$ can be developed as
	\begin{equation}\label{eqs-constant-field}
		\begin{aligned}
			&\f_{\V} '-\k_{\bm{1}} \h_{\V} +\k_{\bm{2}} \l_{\V}=0, \\
			&a\f_{\V} +\h_{\V} ' -\varepsilon_1 \k_{\bm{1}} \g_{\V}=0, \\ 
			&\varepsilon_1 a\h_{\V} +\g_{\V} '=0, \\ 
			&\l_{\V} ' +\varepsilon_2\k_{\bm{2}} \g_{\V}=0.  \\
		\end{aligned}
	\end{equation} 
	Suppose that $\rk(\l_{\V})=n\neq 0$, then the equations \eqref{eqs-constant-field} give rise to the following implications
	$$\rk(\g_{\V})=n+1 \Rightarrow \rk(\h_{\V})=n+2 \Rightarrow \rk(\f_{\V})=n+3.$$
	Nevertheless, those orders of derivation represent a direct contradiction to the first equation in \eqref{eqs-constant-field} unless $\l=\g=\h=\f=0$.
\end{proof}

\begin{remark}\label{main-remark}
	From Theorem \ref{main-theorem} we have that $\mathop{\rm Im}\nolimits(\Phi)$ is a Lie subalgebra of the algebra   $\mathop{\rm der}\nolimits^*(\mathcal{A})$ of all evolution derivations. Thus, we conclude  that the algebra of evolution vector fields $\mathfrak{X}_{\mathcal{A}}(\Lambda)$ on $\Lambda$ can be regarded as a Lie subalgebra of the evolution derivations. 
\end{remark}
Consider the vector fields $\V_{\bm{0}}=b \T$ and $\V_{\bm{1}}=-ac\k_{\bm{1}} \T-2\varepsilon_{1}a^{2}c\N$ borrowed from Example \ref{example-constants}. Their flows $\gamma_{t}\in\Lambda$ are governed by the equations
\begin{align}
	\frac{d}{dt}(\gamma_t)&=\V_{\bm{0}\gamma_{t}}=b\T_{\gamma_{t}}, \\
	\label{NLIE}
	\frac{d}{dt}(\gamma_t)&=\V_{\bm{1}\gamma_{t}}=-ac \k_{\bm{1}\gamma_{t}}\T_{\gamma_{t}}-2 \varepsilon_1 a^2 c \N_{\gamma_{t}},
\end{align}
which in turn induce evolutions for the curvature functions $\k_{\bm{1}}$ and $\k_{\bm{2}}$ given by 
\begin{align}
	\label{curve-flow-zero}
	&\begin{cases}
		\frac{d}{dt}(\k_{\bm{1}\gamma_{t}})=\V_{\bm{0}\gamma_{t}}(\k_{\bm{1}\gamma_{t}})=b\k'_{\bm{1}\gamma_{t}} \\ 
		\frac{d}{dt}(\k_{\bm{2}\gamma_{t}})=\V_{\bm{0}\gamma_{t}}(\k_{\bm{2}\gamma_{t}})=b\k'_{\bm{2}\gamma_{t}} 
	\end{cases}\\
	\label{curve-flow-one}
	&\begin{cases}
		\frac{d}{dt}(\k_{\bm{1}\gamma_{t}})=\V_{\bm{1}\gamma_{t}}(\k_{\bm{1}\gamma_{t}})= c \left(\k_{\bm{1}\gamma_{t}}^{(3)}+3 a \k_{\bm{1}\gamma_{t}} \k_{\bm{1}\gamma_{t}}'+6 \epsilon_1 \epsilon _2 a \k_{\bm{2}\gamma_{t}}  \k'_{\bm{2}\gamma_{t}}\right)\\
		\frac{d}{dt}(\k_{\bm{2}\gamma_{t}})=\V_{\bm{1}\gamma_{t}}(\k_{\bm{2}\gamma_{t}})=-c \left(2 \k_{\bm{2}\gamma_{t}}^{(3)}+3 a \k_{\bm{1}\gamma_{t}} \k_{\bm{2}\gamma_{t}}'\right)
	\end{cases}
\end{align} 
Observe that the evolution equation \eqref{curve-flow-one} is a generalization for the Hirota-Satsuma equation \eqref{eq-HS} (through suitable constants). Besides, the flows associated to the curvatures given by $\V_{\bm{0}}$ and $\V_{\bm{1}}$ are basically the flows $\sigma_{0}$ and $\sigma_{1}$ given in \eqref{eq-flows-HS}.  
We refer to the equation \eqref{NLIE} induced by $\V_{\bm{1}}$ (that also appears in \cite{li_motions_2013}) as the \emph{null localized induction equation (NLIE)}. 
Theorem \ref{main-theorem} will be used below to obtain a recursion operator for NLIE,  and thereby prove its integrability. 

\begin{proposition}\label{prop-recursion}
	The operator $\R$ acting on symmetries $\V$ as follows 
	\begin{equation}
		\R(\V)=\mathcal{X}\left(\frac{1}{2}\V(\k_{\bm{1}}),-\varepsilon_{1}\varepsilon_{2}\V(\k_{\bm{2}})\right),
	\end{equation} 
	is a recursion operator for NLIE.	
\end{proposition}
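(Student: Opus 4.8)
The strategy is to transport the assertion, through the injective Lie algebra homomorphism $\Phi$ of Theorem~\ref{main-theorem}, down to the level of the curvature flows, where $\R$ becomes the recursion operator $R=\Theta\circ J$ of the (rescaled) Hirota--Satsuma system. Recall from Section~2 that showing $\R$ is a recursion operator for NLIE means establishing $L_{\V_{\bm{1}}}\R=0$, i.e. $[\V_{\bm{1}},\R(\V)]=\R([\V_{\bm{1}},\V])$ for all $\V\in\mathfrak{X}_{\mathcal{A}}(\Lambda)$ for which $\R(\V)$ is defined. Set $F=(\V_{\bm{1}}(\k_{\bm{1}}),\V_{\bm{1}}(\k_{\bm{2}}))$, the curvature flow induced by $\V_{\bm{1}}$, which by \eqref{curve-flow-one} is, up to rescaling, the Hirota--Satsuma flow $\sigma_{1}$ of \eqref{eq-flows-HS}.

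First I would record the intertwining relation $\Phi(\R(\V))=\Partial_{R(\V(\k_{\bm{1}}),\V(\k_{\bm{2}}))}$. Indeed, $\R(\V)=\mathcal{X}\!\left(\tfrac12\V(\k_{\bm{1}}),-\varepsilon_{1}\varepsilon_{2}\V(\k_{\bm{2}})\right)$ belongs to $T_{\mathcal{P},\gamma}(\Lambda)$ by construction and has screen-bundle projections $h_{\R(\V)}=\tfrac12\V(\k_{\bm{1}})$ and $l_{\R(\V)}=-\varepsilon_{1}\varepsilon_{2}\V(\k_{\bm{2}})$; since we are in $\mathbb{R}^{4}_{q}$ (so $G=0$), formula \eqref{eq-recursion} applied to $\R(\V)$ yields
\begin{equation*}
	\bigl(\R(\V)(\k_{\bm{1}}),\R(\V)(\k_{\bm{2}})\bigr)=R\bigl(2h_{\R(\V)},-\varepsilon_{1}\varepsilon_{2}l_{\R(\V)}\bigr)=R\bigl(\V(\k_{\bm{1}}),(\varepsilon_{1}\varepsilon_{2})^{2}\V(\k_{\bm{2}})\bigr)=R\bigl(\V(\k_{\bm{1}}),\V(\k_{\bm{2}})\bigr),
\end{equation*}
because $(\varepsilon_{1}\varepsilon_{2})^{2}=1$. (Linearity of $\R$, and $\R(0)=0$, are immediate from the formula for $\mathcal{X}$ once the integration constants are set to zero, cf. Example~\ref{example-constants}.) Second, I would invoke the classical result of \cite{fuchssteiner_lie_1982,oevel_integrability_1983}: with $J$ and $\Theta$ the operators of \eqref{symplectic-cosymplectic}, $R=\Theta\circ J$ is a recursion operator for the (rescaled) Hirota--Satsuma system whose flow is $F$, so that $L_{F}R=0$, i.e. $[F,Rs]=R[F,s]$ for every evolutionary curvature flow $s$.

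Finally I would combine these with Theorem~\ref{main-theorem}. Writing $s=(\V(\k_{\bm{1}}),\V(\k_{\bm{2}}))$ and using that $\Phi$ is an injective homomorphism, that $\Phi(\V_{\bm{1}})=\Partial_{F}$, the intertwining relation just proved, and the identity $\bigl([\V_{\bm{1}},\V](\k_{\bm{1}}),[\V_{\bm{1}},\V](\k_{\bm{2}})\bigr)=[F,s]$ extracted from the proof of Theorem~\ref{main-theorem}, one obtains
\begin{equation*}
	\Phi\bigl([\V_{\bm{1}},\R(\V)]-\R([\V_{\bm{1}},\V])\bigr)=\Partial_{[F,Rs]}-\Partial_{R[F,s]}=\Partial_{[F,Rs]-R[F,s]}=\Partial_{0}=0,
\end{equation*}
hence $[\V_{\bm{1}},\R(\V)]=\R([\V_{\bm{1}},\V])$ by injectivity of $\Phi$; in particular $\R$ sends symmetries of NLIE to symmetries.

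The bracket bookkeeping above is routine. The two delicate points are: (i) the well-definedness of $\R(\V)$, i.e. that $\bigl(\tfrac12\V(\k_{\bm{1}}),-\varepsilon_{1}\varepsilon_{2}\V(\k_{\bm{2}})\bigr)\in\mathcal{Q}$, which holds along the hierarchy generated from $\V_{\bm{0}}$ and $\V_{\bm{1}}$ (where $R$ acts within the Hirota--Satsuma hierarchy), and otherwise the identity $L_{\V_{\bm{1}}}\R=0$ is to be read formally; and (ii), the main obstacle, the explicit matching under rescaling of the geometric pair $(J,\Theta)$ of \eqref{symplectic-cosymplectic} and the flow $F$ with the symplectic/cosymplectic pair \eqref{eq-symplectic-HS}--\eqref{eq-cosymplectic-HS} and the flow $\sigma_{1}$ of the standard Hirota--Satsuma system, needed so that the statement $L_{\sigma_{1}}(\Theta J)=0$ of \cite{oevel_integrability_1983} can be applied verbatim.
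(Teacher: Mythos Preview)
Your proof is correct and follows essentially the same approach as the paper: compute the intertwining relation $(\R(\V)(\k_{\bm{1}}),\R(\V)(\k_{\bm{2}}))=R(\V(\k_{\bm{1}}),\V(\k_{\bm{2}}))$ via \eqref{eq-recursion}, then invoke Theorem~\ref{main-theorem} to transport the recursion property of $R=\Theta J$ for the Hirota--Satsuma system up to the curve level. The paper's own argument is a two-line version of yours, leaving the appeal to the known recursion property of $R$ and the bracket bookkeeping implicit; your explicit identification of the delicate points (membership in $\mathcal{Q}$ and the rescaling match with \eqref{eq-symplectic-HS}--\eqref{eq-cosymplectic-HS}) goes beyond what the paper spells out.
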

\begin{proof}
	Let $\U=\R(\V)$. Then by definition and making use of the equation \eqref{eq-recursion} we obtain 	$$(\U(\k_{\bm{1}}),\U(\k_{\bm{2}}))=R(2\h_{\U},-\varepsilon_{1}\varepsilon_{2}\l_{\U})=R(\V(\k_{\bm{1}}),\V(\k_{\bm{2}})).$${}
	The result can be easily deduced as a consequence of Theorem \ref{main-theorem}.
\end{proof}

We now proceed with the construction of an infinite hierarchy of symmetries following the same scheme as in equation \eqref{eq-flows-HS}, 
\begin{equation}
	\V_{\bm{2n}}=\R^{n}\V_{\bm{0}};\quad \V_{\bm{2n+1}}=\R^{n}\V_{\bm{1}}.
\end{equation}
Then, we have
\begin{equation}
\begin{aligned}
	\V_{\bm{2}}=\R\V_{\bm{0}}&=\left(-\frac{b}{4a}\k_{\bm{1}}''-\frac{b}{8}\k_{\bm{1}}^{2}+\frac{\epsilon_{1}\epsilon_{2}b}{4}\k_{\bm{2}}^{2}+\frac{\epsilon_{1}c_{1}}{2a}\k_{\bm{1}}+c_{2}\right)\T+\frac{b}{2}\k_{\bm{1}}'\W_{\bm{1}} \\
	&\quad +\left(c_{1} -\frac{\epsilon_{1}a b}{2}\k_{\bm{1}}\right)\N-\epsilon_{1}\epsilon_{2}b\k_{\bm{2}}'\W_{\bm{2}},
\end{aligned}	
\end{equation}
and the corresponding curvature flow is
\begin{equation}
	\begin{aligned}
		\V_{\bm{2}}(\k_{\bm{1}})&=\frac{1}{8a^{2}}\left[2b\k_{\bm{1}}^{(5)}+\left(10ab\k_{\bm{1}}-4\epsilon_{1}c_{1}\right)\k_{\bm{1}}^{(3)}+20\epsilon_{1}\epsilon_{2}ab\k_{\bm{2}}\k_{\bm{2}}^{(3)}+20ab\k_{\bm{1}}'\k_{\bm{1}}''\right.\\
		&\quad \left. +20\epsilon_{1}\epsilon_{2}a b\k_{\bm{2}}'\k_{\bm{2}}''+\left(15a^{2}b\k_{\bm{1}}^{2}+10\epsilon_{1}\epsilon_{2}a^{2}b\k_{\bm{2}}^{2}-12\epsilon_{1}ac_{1}\k_{\bm{1}}+8a^{2}c_{2}\right)\k_{\bm{1}}' \right. \\
		&\quad \left. +\left(20\epsilon_{1}\epsilon_{2}a^{2}b\k_{\bm{1}}-24\epsilon_{2}ac_{1}\right)\k_{\bm{2}}\k_{\bm{2}}')\right] \\
		\V_{\bm{2}}(\k_{\bm{2}})&=\frac{1}{8a^{2}}\left[-8 b\k_{\bm{2}}^{(5)}+\left(8\epsilon_{1}c_{1}-20ab\k_{\bm{1}}\right)\k_{\bm{2}}^{(3)}-10ab\k_{\bm{1}}''\k_{\bm{2}}'-20ab\k_{\bm{1}}'\k_{\bm{2}}''\right.\\
		&\quad \left.+\left(10\epsilon_{1}\epsilon_{2}a^{2}b\k_{\bm{2}}^{2}-5a^{2}b\k_{\bm{1}}^{2}+12\epsilon_{1}ac_{1}\k_{\bm{1}}+8a^{2}c_{2}\right)\k_{\bm{2}}'\right].
	\end{aligned}
\end{equation}
Likewise, the next vector field in the hierarchy becomes
\begin{equation}
	\begin{aligned}
		\V_{\bm{3}}=\R\V_{\bm{1}}&=\left(-\frac{c}{4a}\k_{\bm{1}}^{(4)}-\frac{3c}{4}\k_{\bm{1}}\k_{\bm{1}}''-\frac{7c}{8}\left(\k_{\bm{1}}'\right)^{2}-\frac{5\epsilon_{1}\epsilon_{2}c}{2}\k_{\bm{2}}\k_{\bm{2}}''-\epsilon_{1}\epsilon_{2}c\left(\k_{\bm{2}}'\right)^{2}\right. \\
		&\qquad \left. -\frac{ac}{8}\k_{\bm{1}}^{3}+\frac{\epsilon_{1}c_{3}}{2a}\k_{\bm{1}}-\frac{3\epsilon_{1}\epsilon_{2}ac}{4}\k_{\bm{1}}\k_{\bm{2}}^{2}+c_{4}\right)\T \\
		&\quad+\left(\frac{c}{2}\k_{\bm{1}}^{(3)}+\frac{3ac}{2}\k_{\bm{1}}\k_{\bm{1}}'+3\epsilon_{1}\epsilon_{2}ac\k_{\bm{2}}\k_{\bm{2}}'\right)\W_{\bm{1}} \\
		&\quad +\left(-\frac{\epsilon_{1}ac}{2}\k_{\bm{1}}''-\frac{3\epsilon_{1}a^{2}c}{4}\k_{\bm{1}}^{2}-\frac{3\epsilon_{2}a^{2}c}{2}\k_{\bm{2}}^{2}+c_{3}\right)\N \\
		&\quad+\left(2\epsilon_{1}\epsilon_{2}c\k_{\bm{2}}^{(3)}+3\epsilon_{1}\epsilon_{2}ac\k_{\bm{1}}\k_{\bm{2}}'\right)\W_{\bm{2}}.
	\end{aligned}
\end{equation}

Note that the above geometric hierarchy of commuting vector fields at the curve level is a generalization of 
the ones obtained in \cite{del_amor_hamiltonian_2014} for the $3$-dimensional case, albeit using a different procedure. In fact, it was not possible to extend  the procedure used in \cite{del_amor_hamiltonian_2014} to obtain the recursion operator and the Hamiltonian structure at the curve level to the $4$-dimensional setting, mainly because of the appearance of nonlocal vector fields. Searching for a Hamiltonian structure at the curve level for the $4$-dimensional case will be one of the subject for future research.

\section{Conclusions}
In this paper, our primary aim was to study the integrability properties of null curve evolutions in a flat $4$-dimensional background. We undertook our research in an enough degree of generality for the purpose of showing the role of the constants appearing on it, especially when they possess geometrical meaning. In that regard, it is particularly important the way in which the computations were conducted to expose the most important elements of the Hamiltonian structure for curvature flows. One of the most surprising fact was to obtain the recursion operator (split into both the Poisson operator and the symplectic operator in formulas \eqref{split1} and \eqref{symplectic-cosymplectic}) of the Hirota-Satsuma system by means of the geometry of null curves  or, more precisely, making use of the projection of convenient variation vector fields onto the screen bundle. Similar results were obtain in \cite{beffa_integrable_2002,sanders_integrable_2003}, this suggesting that the screen bundle of a null curve may be thought of as playing the same role of the normal bundle in a Riemannian curve.  We can therefore also state the following important conclusion: if we have a evolution vector field $\V$ with $(\bm{\varphi}_{\V},\bm{\psi}_{\V})$ having the property of being the gradient of a certain functional $\bm{H}$, then the flow associated to the curvatures $(\V(\k_{\bm{1}}),\V(\k_{\bm{2}}))$ is a completely integrable Hamiltonian system. 

Furthermore, in Proposition \ref{prop-recursion} we have lifted the recursion operator for the Hirota-Satsuma system (at the curvature level) to a recursion operator for the NLIE equation (at the curve level), enabling us to obtain an infinite hierachy of commuting vector fields. Proposition \ref{prop-LieBracket2} shows that the subspace consisting of $\mathcal{A}$-local evolution vector fields (denoted by $\mathfrak{X}_{\mathcal{A}}(\Lambda)$) is closed under bracket and contains the commuting flows as a subalgebra. 

One of the many benefits of increasing the dimension of the ambient space has been that the connections between integrable hierarchies of both null curves and their curvature flows become clearer. Nevertheless, finding a Hamiltonian structure at the curve level still needs to be achieved. In addition, it would be interesting to develop a purely geometric method to construct the existing structures of the dynamic of null curve motions without lifting any element from the curvature flow.
Accordingly, further work is needed, perhaps in a nonlocal background, if possible, to properly understand which also has appeared in different contexts. 

\section*{Acknowledgments}
This work has been partially supported by MINECO (Ministerio de Economía y Competitividad) project MTM2015-65430-P, and by Fundación Séneca project 19901/GERM/15, Spain.


\end{document}